\documentclass[journal]{IEEEtran}
\usepackage{amssymb}

\usepackage{lipsum}
\usepackage[english]{babel}
\usepackage{mathrsfs}
\newlength{\halfpagewidth}
\divide\halfpagewidth by 2

\newtheorem{corollary}{Corollary}

\newtheorem{theorem}{Theorem}
\newtheorem{lemma}{Lemma}
\newtheorem{remark}{Remark}

\newenvironment{proof}{{\indent \it {Proof}:\quad}}{\hfill $\square$\par}
\newenvironment{corollarybox} {\begin{corollary}}{\hfill \interlinepenalty500 $\Box$\end{corollary}}
\newenvironment{theorembox} {\begin{theorem}}{\hfill \interlinepenalty500 $\Box$\end{theorem}}

\newtheorem{innerdefinition}{Definition}
\newenvironment{definition}
  {\begin{innerdefinition}}
  {\hfill $\square$\end{innerdefinition}}

\newenvironment{remarkbox} {\begin{remark}}{\hfill \interlinepenalty500 $\Box$\end{remark}}

\usepackage{amsmath}
\usepackage[colorlinks=true, allcolors=blue]{hyperref}
\usepackage{color}
\usepackage{lettrine}
\usepackage{amssymb}
\usepackage{subcaption}
\usepackage{booktabs}
\usepackage{graphicx}

\usepackage{tabularray}
\usepackage{array}
\title{Fluid Antenna Enabled Compact Ultra Massive Antenna Array for Satellite Communications}
\author{Tianyu Han, Yongxu Zhu, Gan Zheng, Pantelis-Daniel Arapoglou
\vspace{-8mm}

\thanks{T. Han and G. Zheng are with the School of Engineering, University of Warwick, Coventry, UK (Email: {tianyu.han, gan.zheng}@warwick.ac.uk).}
\thanks{Y. Zhu is with the National Communications Research Laboratory, Southeast University, Nanjing, China (Email: yongxu.zhu@seu.edu.cn).}
\thanks{Pantelis-Daniel Arapoglou is with the European Space Agency Research
and Technology Centre (ESA/ESTEC), 2201 AZ Noordwijk, The Netherlands
(e-mail: pantelis-daniel.arapoglou@esa.int).}
}

\begin{document}
\maketitle

\begin{abstract}
Satellites provide seamless coverage and are critical for emergency communications during natural disasters. However, their performance is constrained by limited spectrum and high deployment cost. To address these issues, we propose a fluid antenna system (FAS)-based solution that enables dynamic signal adaptation. Building on this concept, a compact ultra-massive antenna array (CUMA) is introduced, where multiple ports are simultaneously activated to coherently combine signal components. This design mitigates interference while reducing cost, as each fluid antenna requires only a single RF chain yet achieves significant improvement in the received signal-to-interference-plus-noise ratio (SINR). We consider a satellite CUMA network where all ground users share the same satellite for uplink transmission, and CUMA is employed to suppress inter-user interference. Closed-form expressions for the received signal power, interference power, and their distributions are derived. Based on these results, the outage probability is obtained in a unified form along with an accurate approximation, and the ergodic rate is characterized. Our analysis identifies the conditions under which CUMA outperforms maximum ratio combining in satellite systems. Notably, with sufficiently compact fluid antenna configurations, the received signal becomes deterministic, indicating that system performance is dominated by interference statistics. Moreover, increasing the number of ports yields a linear beamforming gain. Numerical results further compare orthogonal and non-orthogonal multiple access CUMA, showing that the latter achieves superior performance under wideband conditions.

\end{abstract}

\begin{IEEEkeywords}
Satellite communications, fluid antenna, compact ultra massive antenna array.
\end{IEEEkeywords}

\section{Introduction}

\IEEEPARstart{S}{atellite} communication plays a pivotal role in addressing the escalating demand for worldwide connectivity, which is envisioned as one of the key goals of 6G \cite{Satellite_6G}. Despite offering extensive coverage, the deployment of satellite systems incurs substantial costs, primarily due to the weight of the satellite. Moreover, satellite systems face constraints in spectral allocation, which vary across countries and regions. The spectrum assigned to satellite communications is limited, making effective resource management and allocation a complex challenge. This spectrum scarcity, combined with the variability in frequency bands allowed in different jurisdictions, often results in users sharing the same frequency being geographically close to each other. Such interference can significantly degrade communication quality and reliability, presenting a unique challenge to satellite communications networks. 

\subsection{Satellite, CUMA and Literature Review}

Motivated by the aforementioned discussion, it is crucial to explore methods that enhance satellite communications resilience to interference while maintaining a lightweight system design to meet deployment constraints. Recent advancements in satellite communications increasingly integrate multiple-input multiple-output (MIMO) technology. However, the predominance of the line-of-sight (LoS) propagation path, coupled with the high spatial correlation induced by rain attenuation, presents significant challenges to achieving independent channels and full-rank MIMO performance. To address these limitations, satellite MIMO systems often necessitate the deployment of separate satellites \cite{MIMO_Satellite}. It is important to note, however, that independent channels are essential for multiplexing gain but not for beamforming gain \cite{tse2005fundamental}. An alternative approach proposes leveraging different polarization to achieve channel independence without significant separation between satellites. Early works confirmed that dual polarization can enhance the performance of a single satellite by utilizing polarization diversity \cite{Polarization_Single_Satllite,Polarization_Single_Satllite_2}. Lately, the authors of \cite{Polarization_Loss} investigated the effects of polarization conversion loss on the performance of non-orthogonal multiple access (NOMA) satellite networks.

Recently, a novel, lightweight reconfigurable antenna technology, known as fluid antenna system (FAS) \cite{Fluid_antenna_system}, has been proposed as an effective solution to mitigate interference in various communication scenarios. This technology leverages the mobility of the antenna, thereby reducing the number of RF chains, which are not only expensive but also heavy. FAS enables multiple users to be served on the same time-frequency resource, a concept referred to as fluid antenna multiple access (FAMA), where FAS could select the "best" port among all the ports, thereby improving the received signal-to-interference ratio (SIR). In \cite{f_FAMA}, it was suggested that FAS port can be switched at the symbol level, a method termed fast FAMA ($f$-FAMA). Later, \cite{s_FAMA} introduced slow FAMA ($s$-FAMA), where the FAS ports are switched only when the fading channel changed. Both the results in \cite{f_FAMA} and \cite{s_FAMA} confirmed that under the rich scattering scenario, $f$-FAMA and $s$-FAMA is a feasible solution to combat interference. Recently, another version of FAMA, referred to as compact ultra massive antenna (CUMA), was proposed in \cite{CUMA}. In CUMA, instead of activating just the best port, a large number of ports with aligned channels are activated to enhance the received signal for detection. The standard version of CUMA, as proposed in \cite{CUMA}, requires two RF chains: one dedicated to processing the in-phase component, selecting the larger positive or negative values from the set of activated ports, and the other to the quadrature component, applying the same selection criterion. Subsequently, in \cite{CUMA_4}, the authors extended CUMA by employing four RF chains, utilizing both the positive and negative parts of the in-phase and quadrature components for detection. Further, \cite{CUMA_even} generalized these results by employing any even number of RF chains, thereby improving performance.

\subsection{Aim and Contributions}

The limited spectrum and inevitable interference problem \cite{satellite_interference}, coupled with high launch costs \cite{satellite_launch_cost}, hinders the widespread adoption of satellite communications. In this paper, we propose equipping satellite with FAS and employing CUMA, which provides a lightweight and effective approach to combating interference. While the results in \cite{CUMA} indicate that the full benefits of CUMA are realized in a scattering environment, they also demonstrate that CUMA remains a viable solution even when only limited scattering is available. Despite the valuable insights offered by the previous research in CUMA, its performance when only a LoS path exists remains unexplored. The LoS propagation scenario renders the use of correlation structures in CUMA analysis infeasible. However, it also raises the prospect that ports with specific signs of in-phase and quadrature components can be identified in a compact form, thereby making the analysis feasible. Built upon this, this paper examines the performance of CUMA in a satellite communications scenario, specifically focusing on the case where only a LoS path is present. We have made the following detailed contributions:
\begin{itemize}
\item We consider the LoS propagation scenario in satellite communications, where the target user and interfering users and  are located in close proximity, and distinguishing them based on azimuth and elevation angles becomes particularly challenging. Exact and compact expressions for both the received CUMA signal and interference power are derived, enhancing the analytical tractability of the system. Using these compact expressions, we further derive the exact probability density function (PDF) for the received CUMA signal power and an approximate PDF for the overall interference power.
\item Based on the distributions of the received CUMA signal and interference power, we derive the PDF expression for the received signal-to-interference-plus-noise ratio (SINR), which is presented in an integral-form. Additionally, we provide a simplified closed-form PDF expression for the SINR, applicable for the special case where a sufficiently compact fluid antenna is adopted.
\item Using the derived PDF, we present the exact outage probability expression for the typical user, formulated in a single integral-form, along with an approximate closed-form expression. Based on these results, we derive the ergodic rate for the satellite CUMA network. Additionally, we assess the beamforming gain provided by CUMA in satellite communications and establish the condition under which CUMA outperforms maximum ratio combining (MRC) in the compact fluid antenna setting. Our analysis further demonstrates that, in the LoS communication scenario, a single RF chain is sufficient for CUMA  to support each user.
\item Our simulation results confirm the accuracy of all derivations and demonstrate that, although the derivations are based on the assumption that the density $\mu$ is an even integer, the proposed PDF remains valid even in the compact fluid antenna setting. Moreover, increasing the fluid antenna port density $\mu$ can effectively reduce the outage probability of a typical user. For scenarios with larger bandwidth, non-orthogonal CUMA ($N$-CUMA) proves to be a more effective solution compared to orthogonal CUMA ($O$-CUMA).
\end{itemize}

It is worth noting that the considered LoS scenario is not limited to satellite communications. As the frequency employed in communication systems increases, particularly in millimeter-wave bands and beyond, propagation paths may become predominantly or even exclusively LoS. Furthermore, the compact user setup aligns with the growing trend of smart device proliferation. This highly compact user setting represents a worst-case scenario in practical wireless communication, where the target and interfering users are spatially very close. The structure of this paper is organized as follows. Section \ref{sec:model} provides a detailed introduction to the satellite CUMA network model. In Section \ref{sec:analysis}, the primary analytical results are presented, encompassing the expressions and distribution of CUMA signal power, interference power, and SINR, alongside the outage probability and ergodic rate of the satellite CUMA network. Section \ref{Sec:Physical_Insights} offers some physical insights, with a particular focus on the beamforming gain achieved through CUMA. In Section \ref{sec:results}, numerical results are provided and finally, we provide some concluding remarks in Section \ref{sec:conclude}.

\section{System Model}\label{sec:model}
We consider a communication satellite equipped with $U$ fluid antennas, each contains a single RF chain and $K$ ports evenly distributed along a linear dimension of length, $W\lambda$, where $W$ is the scaling factor and $\lambda$ is the carrier wavelength. Consequently, the density of fluid antenna port, measured per $\lambda$, is denoted as $ \mu =\frac{K-1}{W}$ \footnote{In this paper, $W$ is assumed to be an integer, and $\mu$ is taken as an even integer to avoid non-integer issues in the port index, thereby facilitating analytical convenience. In practice, these configurations are determined by industry standards; thus, the insights provided in this paper can serve as guidance for industry design. To facilitate the analysis, the mutual coupling between ports is neglected in this work. A detailed discussion of mutual coupling in fluid antenna system can be found in \cite[Section III.E]{CUMA}}. The first port on the fluid antenna, used for CSI detection and determining the initial phase of the received signal, serves as the reference port and cannot be activated for reception. The rest of $K-1$ ports can be activated simultaneously. Each fluid antenna is assigned to decode signal from a single ground user, with each ground user equipped with a fixed-position antenna for transmitting the uplink signal. The same time-frequency resource was shared among $U$ users. As a result, each fluid antenna receive the one desired signal and $U-1$ interfering signals. This non-orthogonal multiple access CUMA scheme is referred to as $N$-CUMA.
Without loss of generality, this paper analyzes the outage probability of a typical user, as the model is applicable to all users. Additionally, the ergodic rate of the $N$-CUMA network is presented. Fig.~\ref{fig: MISO_FAS_Network} depicts our model. 
 
\begin{figure*}[]
\centering
\includegraphics[width=0.85\linewidth]{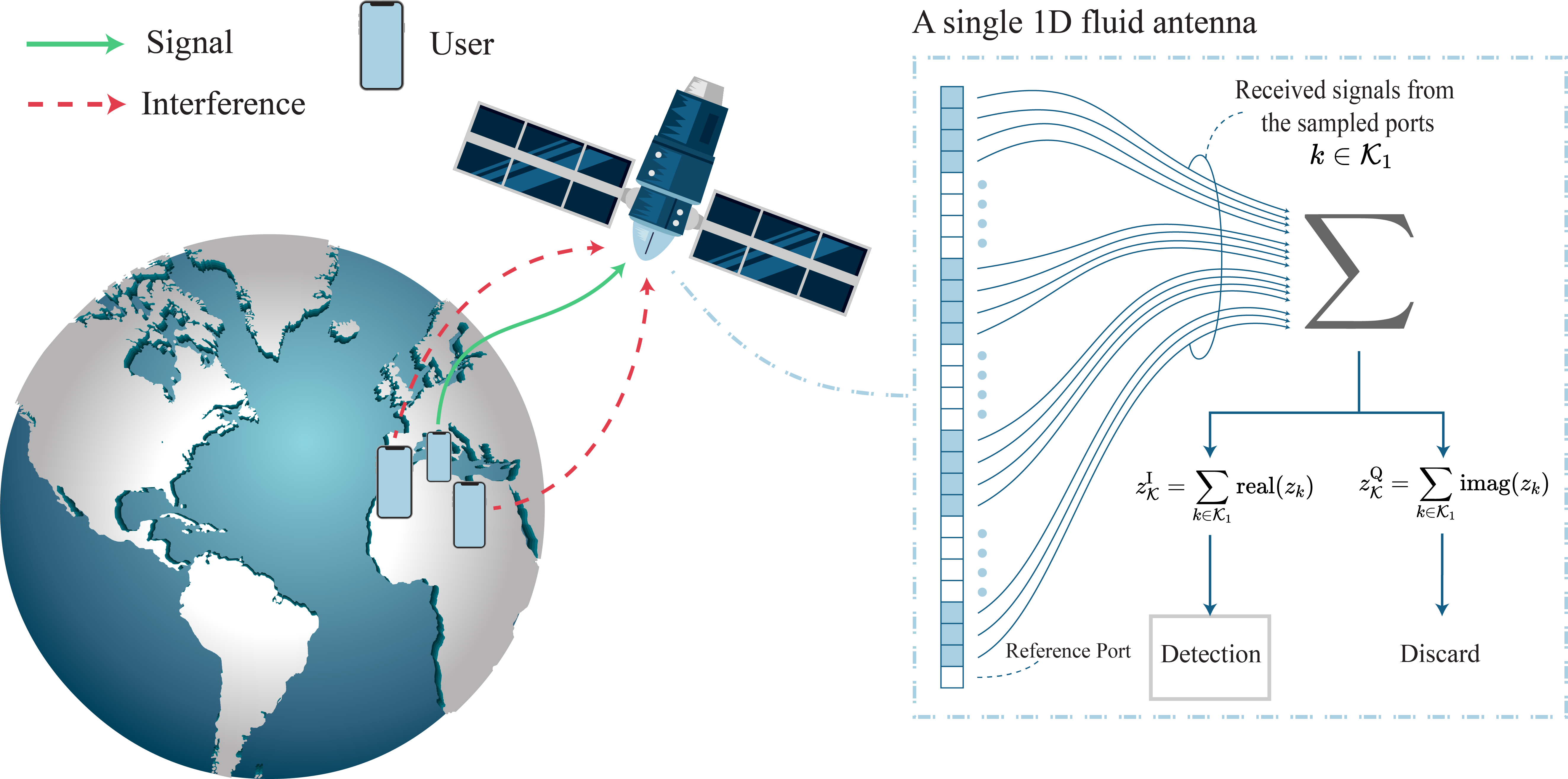}
\caption{A conceptual uplink satellite communications system, where the satellite employs multiple 1D fluid antennas to serve multiple ground users. Multiple access is facilitated using CUMA at the satellite, with a massive number of ports on the fluid antenna sharing a single RF chain to receive and aggregate signals for detection.}\label{fig: MISO_FAS_Network}
\end{figure*}

\subsection{Satellite Fluid Antenna Channel Model}
Due to the propagation characteristics of ground-to-satellite communication, particularly the strong LoS component, the fading can be effectively modeled using the free-space path loss model\footnote{Other large-scale fading coefficients, such as rain attenuation, could also be incorporated into the channel model that expressed in $\zeta_{u}$. Due to the dense user distribution, rain attenuation can be considered identical across all users and is therefore omitted here for brevity.}, expressed as
\begin{align}\label{Path_Loss}
    h_{u,k} &= \zeta_{u,k}^{\frac{1}{2}}e^{j\left(\psi_{u}+\frac{2\pi}{\lambda}r_{u,k}\right)} \nonumber \\
    &\overset{(a)}{\approx} \zeta_{u}^{\frac{1}{2}} e^{j\left(\psi_{u}+\frac{2\pi}{\lambda}r_{u,k}\right)}\nonumber \\ 
    &\overset{(b)}{=} \zeta_{u}^{\frac{1}{2}} e^{j\psi_{u}+j2\pi\frac{\upsilon_{u}}{\mu}(k-1)},
\end{align}
where $r_{u,k}$ is the distance between the $u$-th ground user and the $k$-th port of fluid antenna onboard the satellite, and $\zeta_{u,k}^{\frac{1}{2}} = \frac{\lambda}{4\pi r_{u,k}}$ is the corresponding free space path-loss. $\psi_{u}$ is the phase of the received signal at the reference port of fluid antenna, which follows a continuous uniform distribution in $(0,2\pi)$, denoted as $\mathcal{U}(0,2\pi)$. In $(a)$, we approximated  $\zeta_{u}  = \overline{\zeta_{u,k}}$, due to the fact the distance difference between each port is small enough compared to the distance between satellite and ground user and therefore can be ignored. In other words, only the phase difference can be observed on the different fluid antenna ports on the satellite. In $(b)$, $\frac{\upsilon_{u}}{\mu}(k-1)$ is the propagation distance between the $k$-th port and reference port from ground user $u$, normalized by wavelength $\lambda$. And $\upsilon_{u} = \sin{\theta_{u}}\cos{\phi_{u}}$, where $\theta_{u}$ and $\phi_{u}$ are the azimuth and elevation angle-of-arrival of the corresponding received signal from ground user $u$, receptively. This channel model is a modified version of the one presented in \cite{CUMA}, adapted for LoS propagation.

\subsection{Signal Model}
Under the aforementioned propagation model, the received signal at the $k$-th port of fluid antenna is given by
\begin{align}\label{signal_with_interference}
z_{k} = \sqrt{PG}h_{u,k}s_{u} +  \sqrt{PG}\sum_{\tilde{u}=1\atop \tilde {u}\ne u}^{U}h_{\tilde{u},k}s_{\tilde{u}} + \eta_{k},
\end{align}
where $P$ and $G$ represent the transmitter power and overall antenna gain, respectively, both assumed to be identical across all users. $s_{u}$ denotes signal symbol from the desired ground user, $s_{\tilde{u}}$ is the symbol intended from the $\tilde{u}$-th interfering ground user. $\eta_k$ denotes the zero-mean complex additive white Gaussian noise (AWGN) received at the $k$-th receive port, with a power of $\sigma_\eta^2 = N_0 B$, where $B$ represents the system bandwidth and $N_0 = c_{\mathrm{B}} T$ is the single-sided noise power spectral density. Here, $c_{\mathrm{B}}$ denotes Boltzmann's constant and $T$ is the receiver temperature. The power of the information symbol is ${\rm E}[|s_{u}]^2]= \sigma_{s}^2$.

In the proposed system, we consider a real-valued modulation scheme, where $s_{u} = s_{u}^{\mathrm I}+js_{u}^{\mathrm Q} = s_{u}^{\mathrm I}$, the quadrature (imaginary) part of the information symbol is zero. Consequently, the in-phase and quadrature components of the received signal at the $k$-th port of fluid antenna are expressed as follows
\begin{align}\label{I_signal_port_k}
z_{k}^{\mathrm{ I}}&={{\mathsf{real}}\left (h_{u,k}\right)}s_{u}^{\mathrm{ I}} +{\mathrm{real}}\left ({{\sum _{\substack{\tilde {u}=1\\ \tilde {u}\ne u}}^{U}h_{\tilde {u},k}s_{\tilde {u}}+ {\eta }_{k}}}\right), 
\\ \label{Q_signal_port_k}
z_{k}^{\mathrm{ Q}}&={{\mathsf{imag}}\left (h_{u,k}\right)}s_{u}^{\mathrm{I}} +{\mathsf{imag}}\left ({{\sum _{\substack{\tilde {u}=1\\ \tilde {u}\ne u}}^{U}h_{\tilde {u},k}s_{\tilde {u}}+ {\eta }_{k}}}\right). \end{align}

\subsection{CUMA}

Different from the current FAMA technology, CUMA enhances the received signal by  selectively activating $\overline{K}$ ports from the $K$ available ports. To simplify the notation, we define $\overline{K} = |\mathcal{K}|$, where $\mathcal{K}$ represents the set of indices corresponding to the activated ports. Given the activated ports set $\mathcal{ K}$, the received signal can be expressed as 
\begin{align} \begin{cases} \label{aggregated_signal_I_Q}
\displaystyle {z}_{\mathcal{ K}}^{\mathrm{I}}&=\sum _{k\in {\mathcal{K}}}{\mathsf{real}}\left ({z_{k}}\right),\\ \displaystyle {z}^{\mathrm{ Q}}_{\mathcal{K}}&=\sum _{k\in {\mathcal{K}}}{\mathsf{imag}}\left ({z_{k}}\right), \end{cases}\end{align}
where ${z}_{\mathcal{ K}}^{\mathrm{I}}$ and ${z}_{\mathcal{ K}}^{\mathrm{Q}}$ are the in-phase and quadrature components of the aggregated signal respectively. Substituting \eqref{I_signal_port_k} and \eqref{Q_signal_port_k} into \eqref{aggregated_signal_I_Q}, we have 
\begin{align} 
z^{\mathrm{I}}_{\mathcal{K}}&= s_{u}^{\mathrm{ I}}{\sum_{k\in {\mathcal{K}}}{\mathsf{real}}\left (h_{u,k}\right)} +\sum_{k\in {\mathcal{ K}}}{\mathsf{real}}\left ({{\sum _{\substack{\tilde {u}=1\\ \tilde {u}\ne u}}^{U}h_{\tilde {u},k}s_{\tilde {u}}}}\right) \nonumber \\  &\quad + {\sum _{k\in {\mathcal{ K}}}{\mathsf{real}}\left ({\eta }_{k}\right)}, \\
z^{\mathrm{Q}}_{\mathcal{ K}}&= s_{u}^{\mathrm{I}}{\sum _{k\in {\mathcal{ K}}}{\mathsf{imag}}\left (h_{u,k}\right)} +\sum _{k\in {\mathcal{ K}}}{\mathsf{imag}}\left ({{\sum _{\substack{\tilde {u}=1\\ \tilde {u}\ne u}}^{U}h_{\tilde {u},k}s_{\tilde {u}}}}\right) \nonumber \\
&\quad + {\sum_{k\in {\mathcal{ K}}}{\mathsf{imag}}\left ({\eta }_{k}\right)}.
\end{align}
Note that the aforementioned process only takes a single RF chain, as it involves merely the superposition of the in-phase and quadrature components of the received signal across the activated ports.  

It is evident that the performance of CUMA is dominated by the activated ports in $\mathcal{K}$. Different from previous CUMA work in \cite{CUMA}, where two sets were suggested, here, we consider a single set $\mathcal{K}_{1}$, which contains those ports with positive in-phase channels for the desired signal i.e, $\mathcal{K}_{1} = \left\{ k\mid k \in \{2,3, \dots, K\}~\text{and}~\mathsf{real}\left(h_{u,k}\right) > 0 \right\}$. Therefore, $z^{\mathrm{I}}_{\mathcal{K}_{1}}$ is the superposition over all the positive in-phase channels for the desired signal, while interference and noise are superposed randomly. And $z^{\mathrm{Q}}_{\mathcal{K}_{1}}$ is too noisy for useful reception and thus discarded. As a result, the received in-phase SINR can be enhanced. Later in Section \ref{Sec:Physical_Insights}, we demonstrate that the use of other set yields almost identical result as $\mathcal{K}_{1}$. For notation simplify, the index $\mathcal{K}_{1}$ is omitted in the following discussion.

Subsequently, the received CUMA in-phase SINR can be expressed as
\begin{align}\label{SINR_In_Phase_K_1}
     {\mathrm{ SINR}}_{\mathrm{I}}=\frac {\alpha _{\mathrm{ I}}}{\beta _{\mathrm{ I}}+\frac{\bar{K}}{2\Gamma}},
\end{align}
where 
\begin{align}\label{CUMA_signal}
    \alpha _{\mathrm{ I}} &\triangleq \left[\sum_{k\in {\mathcal{K}}_{1}}{\mathsf{real}}\left (h_{ {u},k}\right)\right]^2, \\ 
    \label{CUMA_interference_u}
    Y_{\tilde{u}} &\triangleq \left[\sum _{k\in {\mathcal{ K}}_{1}}{\mathsf{real}}\left (h_{\tilde {u},k}\right)\right]^2, \\
    \label{CUMA_interference_Slow}
    \beta_{\mathrm{I}} &\triangleq \sum_{\substack{\tilde {u}=1\\ \tilde {u}\ne u}}^{U}Y_{\tilde{u}}, \\
    \label{SNR}
    \Gamma &\triangleq PG\dfrac{\sigma_{s}^{2}}{\sigma_{\eta}^{2}}.
\end{align}
$\alpha_{\mathrm{I}}$ represents the signal power normalized with respect to antenna gain and transmit power. $Y_{\tilde{u}}$ denotes the normalized interference power from user $\tilde{u}$. $\beta_{\mathrm{I}}$ is the normalized total interference power collected on the in-phase components. $\Gamma$ is the average signal-to-noise ratio (SNR) without taking into account the free-space path loss. For simplicity, the term normalized is omitted hereafter.

\section{PERFORMANCE ANALYSIS}\label{sec:analysis}
In this section, we present the primary results evaluating the performance of the satellite CUMA system. To begin with, we express the CUMA signal in a compact form, as well as the interference, in the scenario where users sharing the same time-frequency resources are densely distributed. This scenario is particularly relevant, as satellite communications frequency are allocated differently across countries and regions; thus, users sharing the same frequency are typically confined to relatively close proximity. Subsequently, the PDFs of both the CUMA signal and the interference are provided, along with an analysis of their independence. As a result, the SINR expression of satellite CUMA is derived, with its simplified form under the assumption of compact fluid antenna. Finally, the performance metrics of satellite CUMA are evaluated, including the outage probability of a typical user and the ergodic rate of the network.

Before proceeding, we would like to clarify the assumptions made during the derivation. As a result of the dense user distribution and the large propagation distance, the azimuth angle $\theta_{u}$ and elevation angle $\phi_{u}$ of arrival for signals from different users are assumed to be identical, and therefore, $\upsilon_u = \upsilon_{\tilde{u}}.$ Moreover, to fully activate the potential of LoS CUMA, we consider that the fluid antenna on the satellite can be rotated arbitrarily, such that $\upsilon_u = \upsilon_{\tilde{u}} = 1$, i.e., the fluid antenna ports are positioned in-line with the propagation path of the signal, as well as the interference.

\subsection{Satellite CUMA Signal Distribution}
We find the following lemmas are useful in simplify the expression of the received in-phase CUMA signal, $\alpha _{\mathrm{I}}$. 
\begin{lemma}\label{Lemma_Signal_Expression}
     The aggregation over all the positive in-phase channels of the desired signal magnitude $\sqrt{\alpha _{\mathrm{I}}}$ can be written as 
    \begin{align}\label{Signal_Expression}
        \sqrt{\alpha_{\mathrm{ I}}} = \zeta_{u}^{\frac{1}{2}}W\sum_{ k_{\mathrm{low}}}^{k_{\mathrm{up}}}\cos{\left(\psi_{u}+\frac{2\pi}{\mu}(k-1)\right)},
    \end{align}
    where \begin{align}\label{k_Low_up_Expression}
        \begin{cases}
            k_{\mathrm{low}} &= \left\lceil \left(\frac{3}{4}-\frac{\psi_{u}}{2\pi}\right)\mu\right\rceil+1,  \\
            k_{\mathrm{up}} &=~\left\lfloor \left(\frac{5}{4}-\frac{\psi_{u}}{2\pi}\right)\mu \right\rfloor+1.
        \end{cases}
    \end{align}
    The operators $\lceil \cdot \rceil$ and $\lfloor \cdot \rfloor$ represent the ceiling and flooring operations, respectively. 
\end{lemma}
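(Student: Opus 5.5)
With $\upsilon_u=1$, \eqref{Path_Loss} gives $\mathsf{real}(h_{u,k})=\zeta_u^{1/2}\cos\!\left(\psi_u+\frac{2\pi}{\mu}(k-1)\right)$ for every port $k$. The phase therefore advances by $2\pi/\mu$ per port, so each wavelength contains exactly $\mu$ consecutive ports. The proof will reduce the sum over $\mathcal{K}_1$ to $W$ identical copies of a sum over one period.

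\textbf{Step 1: partition into periods.} The active ports are $k=2,\dots,K$, and since $K-1=\mu W$ there are exactly $\mu W$ of them. I group them into $W$ blocks of $\mu$ consecutive ports. Because $\mu$ is an even integer, a shift of $k$ by $\mu$ changes the phase by exactly $2\pi$. Hence every block contains the same multiset of cosine values, and in particular the same positive ones. It follows that $\sqrt{\alpha_{\mathrm I}}=\sum_{k\in\mathcal K_1}\mathsf{real}(h_{u,k})=W\,\zeta_u^{1/2}\sum_{k\in\mathcal K_1\cap B}\cos(\cdot)$, where $B$ is any single block.

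\textbf{Step 2: identify the positive-cosine indices in one convenient period.} Positivity requires $\psi_u+\frac{2\pi}{\mu}(k-1)\in(-\pi/2+2\pi m,\ \pi/2+2\pi m)$ for some integer $m$. I take the window with $m=1$, i.e. the phase lies in $(3\pi/2,5\pi/2)$. Solving for $k$ gives
\[
\left(\tfrac34-\tfrac{\psi_u}{2\pi}\right)\mu+1<k<\left(\tfrac54-\tfrac{\psi_u}{2\pi}\right)\mu+1 .
\]
Taking integer parts yields $k_{\mathrm{low}}$ and $k_{\mathrm{up}}$ as in \eqref{k_Low_up_Expression}. This interval has length $\mu/2$, so it contains about half a period of ports. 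Since $\psi_u\in(0,2\pi)$, the bounds satisfy $1<k_{\mathrm{low}}$ and $k_{\mathrm{up}}\le \tfrac54\mu+1$.

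\textbf{Step 3: replace the actual block by this representative window.} The window above need not coincide with the first block $\{2,\dots,\mu+1\}$, and $k_{\mathrm{up}}$ may exceed $\mu+1$. By $\mu$-periodicity, however, the sum of positive cosines over any $\mu$ consecutive indices is the same. The interval $[k_{\mathrm{low}},k_{\mathrm{up}}]$ spans less than $\mu$ indices, so it lies inside some window of $\mu$ consecutive integers. It is exactly the set of positive-cosine indices in that window, so its sum equals the per-block sum. (When $W\ge2$ these indices are physically available; when $W=1$ one relabels modulo $\mu$.) Combined with Step 1, this gives \eqref{Signal_Expression}.

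\textbf{Main obstacle: the boundary cases.} The delicate point is the ceiling/floor bookkeeping when an endpoint $\left(\tfrac34-\tfrac{\psi_u}{2\pi}\right)\mu$ or $\left(\tfrac54-\tfrac{\psi_u}{2\pi}\right)\mu$ is an integer. There the cosine is exactly zero, so including or excluding that port does not change the sum. This happens on a $\psi_u$-set of measure zero, and I will note that the strict inequality in $\mathcal K_1$ is immaterial for this reason.
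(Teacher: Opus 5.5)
Your proposal is correct and follows the same route as the paper's (much terser) proof. In both, the port phase has period $\mu$ in $k$, which reduces the sum to $W$ copies of the positive half-period; the bounds $k_{\mathrm{low}},k_{\mathrm{up}}$ come from solving $\cos(\cdot)>0$ on the window $(3\pi/2,5\pi/2)$, and your remark on the zero-cosine boundary cases is also right.

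One side claim is false: $1<k_{\mathrm{low}}$ fails whenever $\psi_u\ge 3\pi/2$. For example, $\mu=4$ and $\psi_u=7\pi/4$ give $k_{\mathrm{low}}=1$, and for larger $\mu$ one can get $k_{\mathrm{low}}\le 0$. So the window $[k_{\mathrm{low}},k_{\mathrm{up}}]$ is not always a set of physical activatable ports, even when $W\ge 2$. This does not break anything, because Step 3 is a purely algebraic periodicity argument: the relabeling modulo $\mu$ you already invoke for $W=1$ covers every case.
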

\begin{proof}
See Appendix \ref{appendix:Proof_Lemma_Signal_Expression}.
\end{proof}
\begin{lemma}\label{Lemma_Upper_Limit_Expression}
    Given $\mu$ as an even integer, the upper limit of the summation in Lemma \ref{Lemma_Signal_Expression},  $k_{\mathrm{up}}$, can be expressed as a function of its lower limit $k_{\mathrm{low}}$, given by
    \begin{align}\label{Expression_k_up}
        k_{\mathrm{up}} = k_{\mathrm{low}} + \frac{\mu}{2}-1.
    \end{align}
\end{lemma}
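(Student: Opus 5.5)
Write $x = \left(\frac{3}{4}-\frac{\psi_{u}}{2\pi}\right)\mu$. Since $\frac{5}{4}-\frac{\psi_u}{2\pi} = \left(\frac34-\frac{\psi_u}{2\pi}\right)+\frac12$, Lemma \ref{Lemma_Signal_Expression} gives
\begin{align}
k_{\mathrm{low}} = \lceil x\rceil + 1, \qquad k_{\mathrm{up}} = \left\lfloor x + \tfrac{\mu}{2}\right\rfloor + 1 .
\end{align}
The claim \eqref{Expression_k_up} is therefore equivalent to the identity $\lfloor x+\mu/2\rfloor = \lceil x\rceil + \mu/2 - 1$. So the proof reduces to a statement about floor and ceiling of a real number shifted by an integer.

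First, I use that $\mu$ is an even integer, so $\mu/2$ is an integer. Integer shifts commute with the floor, so $\lfloor x+\mu/2\rfloor = \lfloor x\rfloor + \mu/2$. The claim then becomes $\lfloor x\rfloor = \lceil x\rceil - 1$.

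Next, I invoke the elementary fact that $\lceil x\rceil - \lfloor x\rfloor = 1$ whenever $x\notin\mathbb{Z}$. Hence the identity holds exactly when $x$ is not an integer. Since $\psi_u\sim\mathcal{U}(0,2\pi)$ is continuous, $x = \left(\frac34-\frac{\psi_u}{2\pi}\right)\mu$ is an integer only for finitely many values of $\psi_u$, a set of probability zero. So the relation holds almost surely.

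The main obstacle is this degenerate case $x\in\mathbb{Z}$. There $\lceil x\rceil=\lfloor x\rfloor$, and the formula would give $k_{\mathrm{up}} = k_{\mathrm{low}}+\mu/2$, one term more. I handle it by going back to the definition of $\mathcal{K}_1$, which uses the strict inequality $\mathsf{real}(h_{u,k})>0$. In the degenerate case the two boundary indices sit exactly where $\cos(\psi_u + 2\pi(k-1)/\mu)=0$. Such ports contribute nothing to $\sqrt{\alpha_{\mathrm I}}$ and are excluded from $\mathcal{K}_1$. Dropping one of them leaves the sum in \eqref{Signal_Expression} unchanged. Either way the statement is valid: almost surely, and without loss of generality for the signal expression.

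I would also check that the count $k_{\mathrm{up}}-k_{\mathrm{low}}+1=\mu/2$ matches the fact that a cosine is positive on exactly half of each period of $\mu$ ports. This gives a sanity check on the result.
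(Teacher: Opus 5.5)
Your argument is correct and is essentially the paper's: both pull the integer $\mu/2$ out of the floor/ceiling and then use $\lceil x\rceil-\lfloor x\rfloor=1$ for non-integer $x$, which the paper phrases as $\lceil \text{f}(A)\rceil=1$ with $A=x+\mu/2$. Your explicit treatment of the null event $x\in\mathbb{Z}$ is more careful than the paper's proof: there the identity fails as literally stated, but the extra boundary port is a zero of the cosine and can be dropped. The paper instead infers $\text{f}(A)>0$ from $A>0$, which does not follow, and so silently skips that case.
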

\begin{proof}
    See Appendix \ref{appendix:Proof_Lemma_Upper_Limit_Expression}.
\end{proof}
\begin{lemma}\label{Lemma:Number_Activated_port}
    The number of activated ports, $\bar{K}$, is given by \begin{align}\label{Activated_Ports_Expression}
        \bar{K} = \frac{K-1}{2}.
    \end{align}
\end{lemma}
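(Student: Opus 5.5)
We prove Lemma \ref{Lemma:Number_Activated_port} by counting ports in $\mathcal{K}_1$ period by period, using the periodic structure of the LoS channel. Under the assumption $\upsilon_u=1$, the in-phase channel at port $k$ is
\begin{align*}
\mathsf{real}(h_{u,k})=\zeta_u^{\frac12}\cos\left(\psi_u+\tfrac{2\pi}{\mu}(k-1)\right).
\end{align*}
This is periodic in $k$ with period $\mu$ (an integer, since $\mu$ is even). Because $K-1=\mu W$ with $W$ an integer, the activatable ports $k=2,\dots,K$ split into exactly $W$ consecutive blocks of $\mu$ ports each. We will show that each block contains exactly $\mu/2$ ports with positive in-phase channel. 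Summing over the blocks then gives $\bar K=W\mu/2=(K-1)/2$.

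Within one period, write $j=k-1$, so $j$ runs over $\mu$ consecutive integers. The phases $\psi_u+2\pi j/\mu$ are then $\mu$ equally spaced points on the circle. The cosine is positive exactly on the open arc $(-\pi/2,\pi/2)$ modulo $2\pi$, and this arc has length $\pi$, i.e.\ it spans $\mu/2$ grid spacings. This is precisely the content of Lemma \ref{Lemma_Signal_Expression} together with Lemma \ref{Lemma_Upper_Limit_Expression}. The positive ports in one period run from $k_{\mathrm{low}}$ to $k_{\mathrm{up}}$ (shifted by multiples of $\mu$), and Lemma \ref{Lemma_Upper_Limit_Expression} gives $k_{\mathrm{up}}-k_{\mathrm{low}}+1=\mu/2$. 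Indeed the factor $W$ in \eqref{Signal_Expression} already reflects that the same index window repeats $W$ times. So, invoking these lemmas, each period contributes $\mu/2$ activated ports.

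The step needing care is the boundary case, where a grid point lands exactly on $\pm\pi/2$. There the cosine is zero, and such a port is excluded from $\mathcal{K}_1$ by the strict inequality. In that case the half-open or closed arc could contain $\mu/2+1$ or $\mu/2$ points depending on the convention, which would break the count. However, $\psi_u\sim\mathcal{U}(0,2\pi)$, so this event has probability zero. I would state that the identity holds almost surely; that is exactly the regime in which the ceiling and floor expressions in \eqref{k_Low_up_Expression} yield \eqref{Expression_k_up}. A second minor point is confirming that the window $[k_{\mathrm{low}},k_{\mathrm{up}}]$ correctly aligns with a period of the port indices $\{2,\dots,K\}$. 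This is immaterial, since counting over any $\mu$ consecutive indices gives the same number by periodicity.
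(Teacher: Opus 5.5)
Your proposal is correct and follows the paper's route: the paper likewise uses the periodicity argument of Appendix~\ref{appendix:Proof_Lemma_Signal_Expression} to write $\bar{K}=W\left(k_{\mathrm{up}}-k_{\mathrm{low}}+1\right)$, and then applies Lemma~\ref{Lemma_Upper_Limit_Expression} to get $\mu/2$ activated ports per period. Your extra care on the almost-sure qualifier and on window alignment is sound, with one small correction: in the null boundary event, the strict inequality $\mathsf{real}(h_{u,k})>0$ (an open arc) leaves $\mu/2-1$ ports per period, not $\mu/2$ or $\mu/2+1$.
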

\begin{proof}
    Following the same process as Appendix \ref{appendix:Proof_Lemma_Signal_Expression}, it is easy to show that $\bar{K} = W\left(k_{\mathrm{up}}-k_{\mathrm{low}}+1\right)$. Then using the result in Lemma \ref{Lemma_Upper_Limit_Expression}, we obtain \eqref{Activated_Ports_Expression}.
\end{proof}

Using Lemma \ref{Lemma_Signal_Expression} and \ref{Lemma_Upper_Limit_Expression}, the expression for the received CUMA signal power $\alpha _{\mathrm{I}}$ can be further simplified as stated in the following theorem.
\begin{theorem}\label{Theorem:Simplified_Signal_Power}
    The received CUMA in-phase signal power $\alpha_{\mathrm{I}}$ is given by \begin{align}\label{eq:Approximated_Expression_CUMA_Signal_Power}
    \alpha _{\mathrm{I}}=\frac{\zeta_{u}\cos^2{\left(-2\pi t-\frac{\pi}{\mu}+\frac{2\pi}{\mu}\left\lceil t\mu \right\rceil\right)}}{V^2},
    \end{align}
    where $t = \frac{3}{4}-\frac{\psi_{u}}{2\pi}$ follows a continues uniform distribution as $\mathcal{U}(-\frac{1}{4},\frac{3}{4})$ and $V = \frac{\sin{\left(\frac{\pi}{\mu}\right)}}{W}$. 
\end{theorem}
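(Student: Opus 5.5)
The plan is to start from Lemma~\ref{Lemma_Signal_Expression} and replace the upper limit using Lemma~\ref{Lemma_Upper_Limit_Expression}. The sum then runs over exactly $\mu/2$ consecutive indices. Write $k_{\mathrm{low}} = \lceil t\mu\rceil + 1$ with $t = \frac{3}{4}-\frac{\psi_u}{2\pi}$, and substitute $m = k - k_{\mathrm{low}}$ for $m = 0,\dots,\frac{\mu}{2}-1$. The phase becomes $\psi_u + \frac{2\pi}{\mu}(k-1) = \theta_0 + \frac{2\pi}{\mu}m$, where
\begin{align*}
\theta_0 = \psi_u + \frac{2\pi}{\mu}\lceil t\mu\rceil = \frac{3\pi}{2} - 2\pi t + \frac{2\pi}{\mu}\lceil t\mu\rceil .
\end{align*}
So $\sqrt{\alpha_{\mathrm I}} = \zeta_u^{1/2} W \sum_{m=0}^{\mu/2-1}\cos(\theta_0 + \frac{2\pi}{\mu} m)$, which is a finite arithmetic-progression cosine sum.

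Next, evaluate this sum with the standard identity
\begin{align*}
\sum_{m=0}^{N-1}\cos(\theta_0+m d) = \frac{\sin(N d/2)}{\sin(d/2)}\cos\!\left(\theta_0+\frac{(N-1)d}{2}\right).
\end{align*}
With $N=\mu/2$ and $d = 2\pi/\mu$ we have $N d/2 = \pi/2$, so the numerator equals $1$. The denominator is $\sin(\pi/\mu)$. The phase is
\begin{align*}
\theta_0 + \frac{\pi}{2} - \frac{\pi}{\mu} = 2\pi - 2\pi t - \frac{\pi}{\mu} + \frac{2\pi}{\mu}\lceil t\mu\rceil .
\end{align*}
The $2\pi$ drops out of the cosine, leaving $\cos(-2\pi t - \frac{\pi}{\mu} + \frac{2\pi}{\mu}\lceil t\mu\rceil)$. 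Squaring and using $V = \sin(\pi/\mu)/W$ gives exactly \eqref{eq:Approximated_Expression_CUMA_Signal_Power}.

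Finally, the distributional claim on $t$ is immediate: $\psi_u\sim\mathcal U(0,2\pi)$ and $t$ is an affine map of $\psi_u$ with slope $-\frac{1}{2\pi}$, so $t\sim\mathcal U(-\frac14,\frac34)$.

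The main obstacle is the bookkeeping of index shifts. The factor $W$ in Lemma~\ref{Lemma_Signal_Expression} comes from the periodicity of the phase over $K-1 = W\mu$ ports, so each period contributes the same sum. I must check that the ceiling from \eqref{k_Low_up_Expression} enters the phase exactly as $\frac{2\pi}{\mu}\lceil t\mu\rceil$, with the $+1$ cancelling the $-1$ in $(k-1)$. I must also check that the leftover constant phases combine to a multiple of $2\pi$ rather than $\pi$. If they combined to $\pi$, the sign would still disappear on squaring, so the result would be unaffected.
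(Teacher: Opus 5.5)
Your argument is correct and is essentially the paper's proof: both combine Lemma~\ref{Lemma_Signal_Expression} with $k_{\mathrm{up}}=k_{\mathrm{low}}+\frac{\mu}{2}-1$ from Lemma~\ref{Lemma_Upper_Limit_Expression}, and both evaluate the resulting cosine sum in closed form via the geometric-series identity. The only difference is bookkeeping: the paper writes the sum as a difference of two partial sums starting at $k=2$ and needs several trigonometric identities to reach its step (d), whereas your reindexing $m=k-k_{\mathrm{low}}$ makes $\sin(\pi/2)=1$ appear at once and lands directly on the cosine form.
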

\begin{proof}
    See Appendix \ref{appendix:Proof_Theorem_Simplified_Signal}.
\end{proof}
\begin{remarkbox}\label{Remak_V}
    For sufficiently large $\mu$, with the small-angle approximation \cite{Small_Angle_Approximation}, the parameter $V$ in Theorem \ref{Theorem:Simplified_Signal_Power} can be approximated as $V \approx \frac{\pi}{\mu W} = \frac{\pi}{K-1}$.
\end{remarkbox}

With the received CUMA in-phase signal power expression provided in Theorem \ref{Theorem:Simplified_Signal_Power}, we establish the following theorem to characterize its distribution. 

\begin{theorem}\label{Theorem_Distribution_Signal}
The PDF of the received CUMA in-phase signal power $\alpha_{\mathrm{I}}$ is given by
\begin{align} \label{Eq:PDF_Signal_Power}
f_{\alpha_{\mathrm{I}}}(\alpha) = \frac{\mu}{2\pi}\sqrt{\frac{V^2}{\zeta_{u}\alpha-V^2\alpha^2}},
\end{align}
where $\alpha \in \left[\frac{\cos^2{\left(\frac{\pi}{\mu}\right)}\zeta_{u}}{V^2},\frac{\zeta_{u}}{V^2}\right]$ is the random variable. 
\end{theorem}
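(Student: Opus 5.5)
We start from the closed form in Theorem \ref{Theorem:Simplified_Signal_Power}, which gives $\alpha_{\mathrm{I}} = \zeta_u \cos^2(X)/V^2$ with $X \triangleq -2\pi t - \frac{\pi}{\mu} + \frac{2\pi}{\mu}\lceil t\mu\rceil$. The plan is to find the distribution of $X$ first and then push it through the map $x \mapsto \zeta_u\cos^2(x)/V^2$.

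\textbf{Step 1: the law of $X$.} Write $t\mu = \lceil t\mu\rceil - \{\!-t\mu\}$, where the bracket denotes the fractional part of $-t\mu$. Then
\begin{align}
X = \frac{2\pi}{\mu}\left(\lceil t\mu\rceil - t\mu\right) - \frac{\pi}{\mu}.
\end{align}
Since $t\sim\mathcal{U}(-\frac14,\frac34)$, the variable $t\mu$ is uniform on an interval of length $\mu$. Because $\mu$ is an integer, this interval is a union of exactly $\mu$ unit intervals, up to the endpoint alignment. Hence $\lceil t\mu\rceil - t\mu$ is uniform on $[0,1)$: on each unit cell it is an affine function of a uniform variable, and every cell carries equal probability $1/\mu$. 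It follows that $X\sim\mathcal{U}(-\frac{\pi}{\mu},\frac{\pi}{\mu})$.

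This is the step I expect to need the most care. The cells $[n-1,n)$ must tile $[-\mu/4, 3\mu/4)$ exactly, with no partial cell at the ends. For $\mu$ even, $\mu/4$ may be a half-integer. Even so, the distribution of the fractional part is still uniform, because any interval of integer length contains each residue class modulo $1$ with equal measure. I would state this fact explicitly so that the endpoint issue does not matter.

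\textbf{Step 2: change of variables.} On $(-\frac{\pi}{\mu},\frac{\pi}{\mu})$, with $\mu\ge 2$, we have $\cos X>0$. The map $x\mapsto\cos^2 x$ is symmetric and two-to-one, monotone on each half $[0,\frac{\pi}{\mu})$. Let $c=\cos^2 X$. Then $|X| = \arccos\sqrt{c}$, and $|X|$ is uniform on $[0,\frac{\pi}{\mu})$ with density $\frac{\mu}{\pi}$. Differentiating,
\begin{align}
\left|\frac{d}{dc}\arccos\sqrt{c}\right| = \frac{1}{2\sqrt{c(1-c)}},
\end{align}
so $f_c(c)=\frac{\mu}{2\pi}\frac{1}{\sqrt{c-c^2}}$ for $c\in[\cos^2\frac{\pi}{\mu},1]$.

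\textbf{Step 3: scaling.} Set $\alpha=\zeta_u c/V^2$, so that $f_{\alpha_{\mathrm{I}}}(\alpha)=\frac{V^2}{\zeta_u}f_c\!\left(\frac{V^2\alpha}{\zeta_u}\right)$. Substituting gives
\begin{align}
\frac{V^2}{\zeta_u}\cdot\frac{\mu}{2\pi}\frac{1}{\sqrt{\frac{V^2\alpha}{\zeta_u}-\frac{V^4\alpha^2}{\zeta_u^2}}} = \frac{\mu}{2\pi}\sqrt{\frac{V^2}{\zeta_u\alpha - V^2\alpha^2}},
\end{align}
which is \eqref{Eq:PDF_Signal_Power}. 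The support is $\left[\frac{\cos^2(\pi/\mu)\zeta_u}{V^2},\frac{\zeta_u}{V^2}\right]$. As a final sanity check, I would confirm that the density integrates to one, since $\int_{\cos^2(\pi/\mu)}^{1}\frac{dc}{2\sqrt{c-c^2}} = \frac{\pi}{\mu}$.
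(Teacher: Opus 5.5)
Your proposal is correct and follows essentially the same route as the paper. Both first show $x=-2\pi t-\frac{\pi}{\mu}+\frac{2\pi}{\mu}\lceil t\mu\rceil\sim\mathcal{U}(-\frac{\pi}{\mu},\frac{\pi}{\mu})$, then fold by the absolute value and invert a monotone map. The paper writes $\alpha_{\mathrm{I}}=\frac{\zeta_u}{2V^2}(1+\cos q)$ with $q=|2x|$ and inverts via $\arccos(\frac{2V^2}{\zeta_u}\alpha-1)$, which is equivalent to your $\arccos\sqrt{c}$. Your Step 1 argument, that the fractional part of a uniform variable on an interval of integer length is uniform, is cleaner than the paper's cell-by-cell argument: the paper treats $\lceil t\mu\rceil$ as discretely uniform and so glosses over the partial end cells when $\mu/4$ is not an integer.
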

\begin{proof}
See Appendix \ref{Appendix:Proof_Theorem_Signal_Distribution}.
\end{proof}
\begin{remarkbox}\label{Remark:Constant_Signal}
   In the case of the compact fluid antenna, i.e., a sufficiently large $\mu$, according to the small angel approximation \cite{Small_Angle_Approximation}, we have $\cos^2{\left(\frac{\pi}{\mu}\right)} \approx 1$. Therefore, the received CUMA in-phase signal power approaches a constant as $\alpha_{\mathrm{I}} = \frac{\zeta_{u}}{V^2}$. This indicates that, under the condition of a highly compact fluid antenna, the received CUMA signal power is independent from the received phase at the reference port. 
\end{remarkbox}

\subsection{Satellite CUMA Interference Distribution}
In this subsection, we derive the expression for the received CUMA in-phase interference power from user $\tilde{u}$, denoted as $Y_{\tilde{u}}$, along with its conditional CDF and PDF. Subsequently, we present the PDF of the total interference power, $\beta_{\mathrm{I}}$.

\begin{theorem}\label{Theorem:Interference_Expression}
    The received CUMA in-phase interference power from user $\tilde{u}$, $Y_{\tilde{u}}$, is given by \begin{align}\label{eq:Interference_Expression}
    Y_{\tilde{u}}=\frac{\zeta_{\tilde{u}}\sin^2{\left(\psi_{\tilde{u}}-\frac{\pi}{\mu}+\frac{2\pi}{\mu}\left\lceil t\mu \right\rceil\right)}}{V^2},
    \end{align}
    where $\psi_{\tilde{u}}$ is the phase of the received interference from user $\tilde{u}$ at the reference port of the fluid antenna, $t$ and $V$ are defined in Theorem \ref{Theorem:Simplified_Signal_Power}.
\end{theorem}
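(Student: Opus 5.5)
The interference term is handled the same way as the signal term. The key observation is that the activated set $\mathcal{K}_1$ depends only on the desired user's phase $\psi_u$. Because $\upsilon_{\tilde u}=\upsilon_u=1$, the interferer's channel at port $k$ has exactly the same phase progression $\frac{2\pi}{\mu}(k-1)$ as the desired channel; only its starting phase $\psi_{\tilde u}$ differs. We can therefore reuse the index structure from Lemma \ref{Lemma_Signal_Expression} and Lemma \ref{Lemma_Upper_Limit_Expression}. Following Appendix \ref{appendix:Proof_Lemma_Signal_Expression}, the ports are periodic with period $\mu$ in the index, so the $K-1$ ports split into $W$ identical blocks. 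This gives
\begin{align*}
\sqrt{Y_{\tilde u}} = \zeta_{\tilde u}^{\frac12} W\sum_{k=k_{\mathrm{low}}}^{k_{\mathrm{low}}+\frac{\mu}{2}-1}\cos\left(\psi_{\tilde u}+\frac{2\pi}{\mu}(k-1)\right),
\end{align*}
with $k_{\mathrm{low}}=\lceil t\mu\rceil+1$ and $t=\frac34-\frac{\psi_u}{2\pi}$. Here the sign of the square root is irrelevant because we square at the end.

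Next, evaluate the finite sum of cosines in closed form using the standard identity
\begin{align*}
\sum_{n=0}^{N-1}\cos(a+nd)=\frac{\sin(Nd/2)}{\sin(d/2)}\cos\left(a+\frac{(N-1)d}{2}\right),
\end{align*}
with $N=\mu/2$, $d=2\pi/\mu$ and $a=\psi_{\tilde u}+\frac{2\pi}{\mu}\lceil t\mu\rceil$. Then $Nd/2=\pi/2$, so the ratio becomes $1/\sin(\pi/\mu)$. The phase is
\begin{align*}
a+\frac{(N-1)d}{2}=\psi_{\tilde u}+\frac{2\pi}{\mu}\lceil t\mu\rceil+\frac{\pi}{2}-\frac{\pi}{\mu}.
\end{align*}
Using $\cos(x+\pi/2)=-\sin x$, we get
\begin{align*}
\sqrt{Y_{\tilde u}}=-\frac{\zeta_{\tilde u}^{1/2}W}{\sin(\pi/\mu)}\sin\left(\psi_{\tilde u}-\frac{\pi}{\mu}+\frac{2\pi}{\mu}\lceil t\mu\rceil\right).
\end{align*}
Squaring and substituting $V=\sin(\pi/\mu)/W$ yields \eqref{eq:Interference_Expression}. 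As a consistency check, setting $\psi_{\tilde u}=\psi_u=2\pi(\frac34-t)$ turns the sine into $\sin(\frac{3\pi}{2}-2\pi t-\frac{\pi}{\mu}+\frac{2\pi}{\mu}\lceil t\mu\rceil)$. This equals $-\cos(-2\pi t-\frac{\pi}{\mu}+\frac{2\pi}{\mu}\lceil t\mu\rceil)$, which recovers Theorem \ref{Theorem:Simplified_Signal_Power}.

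The main point needing care is the block-periodicity step: showing that the sum over all of $\mathcal{K}_1$ collapses to $W$ times a single block of $\mu/2$ consecutive indices starting at $\lceil t\mu\rceil+1$. This step also has to handle the wrap-around when $k_{\mathrm{low}}$ or $k_{\mathrm{up}}$ exceeds $\mu+1$, and the boundary ports where $\mathsf{real}(h_{u,k})=0$. I would handle it exactly as in Appendix \ref{appendix:Proof_Lemma_Signal_Expression}, since that argument uses only the desired phase and the port geometry. It therefore transfers verbatim, with the interferer's $\psi_{\tilde u}$ entering only through the summand.
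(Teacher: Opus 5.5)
Your proposal is correct and matches the paper's argument. The paper omits the proof and says it mirrors \eqref{Appendix_B_1}: keep $\mathcal{K}_1$ fixed by $\psi_u$, so that $k_{\mathrm{low}}=\lceil t\mu\rceil+1$ and $k_{\mathrm{up}}=k_{\mathrm{low}}+\frac{\mu}{2}-1$, replace $\psi_u$ by $\psi_{\tilde{u}}$ only in the summand, and evaluate the geometric sum. The one cosmetic difference is that you sum the $\frac{\mu}{2}$-term block directly instead of as a difference of two partial sums from $k=2$, which is slightly cleaner because it does not require $k_{\mathrm{low}}\geq 2$.
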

\begin{proof}
    This proof follows a similar approach to that in \eqref{Appendix_B_1} from Appendix \ref{appendix:Proof_Theorem_Simplified_Signal}.  Due to space limit, the derivation is omitted here.
\end{proof}
\begin{theorem}
    Conditioned on $t$, the conditional cumulative distribution function (CDF) of $Y_{\tilde{u}}$ is given by
    \begin{align}\label{Contional_PDF_Interference}
        F_{Y_{\tilde{u}}}(Y | t) =1- \frac{\arccos{(\frac{2V^2}{\zeta_{\tilde{u}}}Y-1)}}{\pi}.
    \end{align}
    where $Y$ is the threshold and $t$ is given in Theorem \ref{Theorem:Simplified_Signal_Power}.
\end{theorem}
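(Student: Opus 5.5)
Conditioned on $t$, the plan is to start from Theorem \ref{Theorem:Interference_Expression}. There the only remaining randomness in $Y_{\tilde{u}}$ is the interferer phase $\psi_{\tilde{u}}\sim\mathcal{U}(0,2\pi)$, which is independent of the desired user's phase $\psi_{u}$ and hence of $t$. Fixing $t$ fixes the deterministic offset $c \triangleq -\frac{\pi}{\mu}+\frac{2\pi}{\mu}\lceil t\mu\rceil$. We then set $\Theta = \psi_{\tilde{u}} + c$. Since a uniform phase on a full period remains uniform modulo $2\pi$ after a deterministic shift, $\Theta$ is again uniform on a full period. This gives
\begin{align}
Y_{\tilde{u}} = \frac{\zeta_{\tilde{u}}}{V^2}\sin^2\Theta, \qquad \Theta \sim \mathcal{U}(0,2\pi) \ \text{(mod } 2\pi\text{)}. \nonumber
\end{align}

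Next we use $\sin^2\Theta = \frac{1-\cos 2\Theta}{2}$. Then $\Phi = 2\Theta$ (mod $2\pi$) is still uniform on $(0,2\pi)$, because doubling a uniform angle covers the circle twice with equal weight. The event $\{Y_{\tilde{u}}\le Y\}$ can therefore be rewritten as
\begin{align}
\cos\Phi \ge 1-\frac{2V^2}{\zeta_{\tilde{u}}}Y. \nonumber
\end{align}
For $Y\in[0,\zeta_{\tilde{u}}/V^2]$ the right-hand side lies in $[-1,1]$. Set $a = 1-\frac{2V^2}{\zeta_{\tilde{u}}}Y$. The set of $\Phi\in(-\pi,\pi]$ with $\cos\Phi\ge a$ is the interval $|\Phi|\le\arccos(a)$, which has probability $\arccos(a)/\pi$.

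Finally we use $\arccos(-x)=\pi-\arccos(x)$ to obtain
\begin{align}
\frac{\arccos(a)}{\pi} = 1-\frac{\arccos\!\left(\frac{2V^2}{\zeta_{\tilde{u}}}Y-1\right)}{\pi}, \nonumber
\end{align}
which is \eqref{Contional_PDF_Interference}. Outside the support, the CDF takes the values $0$ and $1$.

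The main point needing care is the first step: justifying that the shift by $c$, which depends on $t$ through the ceiling, does not change the distribution. This holds because $\psi_{\tilde{u}}$ is independent of $\psi_{u}$ and uniform over a full period. The same fact explains why the resulting conditional CDF does not actually depend on $t$, which is worth remarking since it later supports treating the interference as independent of the signal. The remaining manipulations are routine.
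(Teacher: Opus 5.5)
Your proposal is correct and follows the route the paper indicates. The paper omits the derivation and says only that it follows from the CDF definition; your computation is exactly that calculation (shift the uniform interferer phase $\psi_{\tilde{u}}$ by the $t$-dependent constant, apply $\sin^2\Theta=\frac{1-\cos 2\Theta}{2}$, invert with $\arccos$), and it mirrors the inversion the paper uses for $\alpha_{\mathrm{I}}$ in its appendix. Your note that the result does not depend on $t$ because $\psi_{\tilde{u}}$ is independent of $\psi_u$ is the fact the paper relies on in its subsequent independence remark.
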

\begin{proof}
    This expression can be derived based from the CDF definition of random variable. Due to space limit, the derivation is omitted here.
\end{proof}
\begin{remarkbox}\label{Remark:independence}
    The received CUMA in-phase interference power from user $\tilde{u}$, $Y_{\tilde{u}}$, is independent from the in-phase signal power $\alpha_{\mathrm{I}}$, as indicated by its conditional CDF in \eqref{Contional_PDF_Interference}. 
\end{remarkbox}
\begin{corollarybox}\label{Coro_PDF_Intereference_each_user}
    The PDF of the received CUMA in-phase interference power from user $\tilde{u}$, $Y_{\tilde{u}}$, is given by
    \begin{align}\label{PDF_interference_power_each_user}
        f_{Y_{\tilde{u}}}(y) =  \frac{1}{\pi}\sqrt{\frac{V^2}{\zeta_{u}y-V^2y^2}},
    \end{align}
   where $y \in \left[0,\frac{\zeta_{u}}{V^2}\right]$ is the random variable.
\end{corollarybox}
\begin{proof}
    This result can be derived by taking the derivative of \eqref{Contional_PDF_Interference} with respect to $Y$ with further simplification.
\end{proof}

The aforementioned discussion provide the distribution of interference power when a single interference user exist, in the following, we provide the distribution of interference power when a large number of interference user exists, massive access.

\begin{theorem}\label{Theorem:Interference_CIT}
    For a sufficiently large $U$, the received overall CUMA in-phase interference power $\beta _{\mathrm{I}}$ is approximated by a truncated Gaussian distribution, with a PDF given by
   \begin{align}\label{Eq: Interference_PDF}
    f_{\beta_{\mathrm{I}}}(\beta) = \frac{1}{\Phi{\left(\frac{\omega}{\kappa}\right)}}\frac{1}{\sqrt{2\pi\kappa^2}}e^{-\frac{\left(\beta-\omega\right)^2}{2\kappa^2}},
    \end{align}
    where
    \begin{equation}
    \left\{\begin{aligned}
    \omega &= \frac{1}{2 V^2} \sum_{\substack{\tilde{u}=1 \\ \tilde{u} \ne u}}^{U} \zeta_{\tilde{u}},\\
    \kappa &= \sqrt{\frac{1}{8 V^4}\sum_{\substack{\tilde {u}=1\\ \tilde {u}\ne u}}^{U}\zeta_{\tilde{u}}^2},
    \end{aligned} \right.
    \end{equation}
    represents the mean and standard derivation of the untruncated Gaussian, respectively. $\Phi(\cdot)$ is the CDF of standard Gaussian distribution, given by
    \begin{align}
        \Phi(x) = \frac{1}{2\pi}\int_{-\infty}^{x}e^{-\frac{y^{2}}{2}}dy.
    \end{align}
\end{theorem}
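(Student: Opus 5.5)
The plan is to apply a central limit theorem to $\beta_{\mathrm{I}}=\sum_{\tilde u\ne u}Y_{\tilde u}$. The key tool is Theorem~\ref{Theorem:Interference_Expression}, which gives $Y_{\tilde u}=\zeta_{\tilde u}\sin^2(\psi_{\tilde u}+c_t)/V^2$ with $c_t=-\frac{\pi}{\mu}+\frac{2\pi}{\mu}\lceil t\mu\rceil$ a deterministic shift once $t$ is fixed.

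\textbf{Step 1: independence.} The phases $\psi_{\tilde u}\sim\mathcal{U}(0,2\pi)$ are independent across interfering users and independent of $\psi_u$ (hence of $t$). Since $\sin^2$ has period $\pi$, the random variable $\psi_{\tilde u}+c_t \pmod{2\pi}$ is again uniform for every $t$. Conditioned on $t$, the $Y_{\tilde u}$ are therefore independent, and their conditional law does not depend on $t$, consistent with Remark~\ref{Remark:independence}. Consequently the $Y_{\tilde u}$ are mutually independent and also independent of $\alpha_{\mathrm I}$.

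\textbf{Step 2: moments of each summand.} For $\Theta\sim\mathcal U(0,2\pi)$, write $\sin^2\Theta=\frac12(1-\cos2\Theta)$. This gives $\mathrm E[\sin^2\Theta]=\frac12$ and $\mathrm E[\sin^4\Theta]=\frac38$, so $\mathrm{Var}[\sin^2\Theta]=\frac38-\frac14=\frac18$. Hence
\begin{align}
\mathrm E[Y_{\tilde u}]=\frac{\zeta_{\tilde u}}{2V^2},\qquad \mathrm{Var}[Y_{\tilde u}]=\frac{\zeta_{\tilde u}^2}{8V^4}.
\end{align}
The same values follow directly from the arcsine density of Corollary~\ref{Coro_PDF_Intereference_each_user}. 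Summing over $\tilde u\ne u$ yields the stated $\omega$ and $\kappa^2$.

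\textbf{Step 3: CLT and truncation.} The summands are independent but not identically distributed, because the $\zeta_{\tilde u}$ differ. I would therefore invoke the Lyapunov CLT. Each $Y_{\tilde u}$ is bounded by $\zeta_{\tilde u}/V^2$, so the third absolute central moments are of order $\zeta_{\tilde u}^3/V^6$. Under the dense-user assumption the $\zeta_{\tilde u}$ are comparable, say $\zeta_{\tilde u}\in[\zeta_{\min},\zeta_{\max}]$. The Lyapunov ratio is then $O\big((U-1)^{-1/2}\big)\to 0$, so $(\beta_{\mathrm I}-\omega)/\kappa$ converges to $\mathcal N(0,1)$.

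Since $\beta_{\mathrm I}\ge 0$ by construction, I would restrict the Gaussian approximation to $[0,\infty)$ and renormalize. The mass of the untruncated Gaussian on $\beta\ge0$ is $1-\Phi(-\omega/\kappa)=\Phi(\omega/\kappa)$. Dividing by it gives \eqref{Eq: Interference_PDF}, where $\Phi$ is the standard normal CDF with normalization $1/\sqrt{2\pi}$.

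\textbf{Main obstacle.} The hard part is justifying the truncation. The CLT alone gives an untruncated Gaussian, and the truncation is a heuristic correction rather than a consequence of the theorem. I would argue it is harmless, since $\omega/\kappa=\sqrt2\,\sum\zeta_{\tilde u}/\sqrt{\sum\zeta_{\tilde u}^2}$ grows like $\sqrt{2(U-1)}$. The factor $\Phi(\omega/\kappa)$ thus tends to $1$ rapidly, so the truncated and untruncated approximations agree asymptotically while the truncated form respects the support of $\beta_{\mathrm I}$. A secondary, lesser point is confirming that the shift $c_t$ does not break uniformity, which is exactly the periodicity argument of Step~1.
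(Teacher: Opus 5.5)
Your proposal is correct and follows essentially the same route as the paper: the moments $\mathrm{E}[\sin^2\Theta]=\frac{1}{2}$ and $\mathrm{E}[\sin^4\Theta]=\frac{3}{8}$ give the per-user mean and variance, the Lyapunov CLT establishes Gaussianity (the paper bounds a general $(2+\delta)$-th central moment, while you take $\delta=1$), and the Gaussian is then truncated to $[0,\infty)$ and renormalized by $1-\Phi(-\omega/\kappa)=\Phi(\omega/\kappa)$. You also make explicit two points the paper leaves implicit: the independence of the $Y_{\tilde u}$, and the bounded, comparable $\zeta_{\tilde u}$ that its claim that the Lyapunov denominator ``grows polynomially in $U$'' requires; and you correctly use $\Phi$ with the $1/\sqrt{2\pi}$ normalization, which the statement misprints as $1/(2\pi)$.
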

\begin{proof}
    See Appendix \ref{appendix:Proof_Interference_CIT}. 
\end{proof}

With Theorem \ref{Theorem:Interference_CIT} providing the distribution of the overall interference, we have the following corollary that captures the distribution of overall interference power plus noise.
\begin{corollarybox}\label{Coro:PDF_Interference_Plus_Noise}
    The PDF of overall interference power plus noise, $\tilde{\beta_{\mathrm{I}}} = \beta_{\mathrm{I}}+\frac{\bar{K}}{2\Gamma}$, is given by
    \begin{align}\label{Eq:PDF_Interference_Plus_Noise}
        f_{\tilde{\beta_{\mathrm{I}}} }(\tilde{\beta}) = \frac{1}{\Phi{\left(\frac{\omega}{\kappa}\right)}}\frac{1}{\sqrt{2\pi\kappa^2}}e^{-\frac{\left(\tilde{\beta}-\omega-\frac{\bar{K}}{2\Gamma}\right)^2}{2\kappa^2}},
    \end{align}
    where $\tilde{\beta}$ is the random variable that follows a truncated Gaussian distribution.
\end{corollarybox}
\begin{proof} 
    This result can be derived from Theorem \ref{Theorem:Interference_CIT} and using the PDF location-scale transformation provided in \cite[Example 8.1.4]{Probability_Book}.
\end{proof}

\subsection{Satellite CUMA SINR Analysis}
With the distributions of signal and interference power established above, we now turn to investigate the distribution of the CUMA SINR. To begin with, we have the following theorem that characterizes its PDF.
\begin{theorem}\label{Theorem:PDF_Z}
    The PDF of the SINR variable $Z_{\mathrm{I}} = \frac{\alpha_{\mathrm{I}}}{\tilde{\beta_{\mathrm{I}}}}$ is given by
    \begin{multline}\label{Eq:PDF_Z}
    f_{Z_{\mathrm{I}}}(z) = \frac{\mu}{2\pi}\frac{1}{\Phi{\left(\frac{\omega}{\kappa}\right)}}\frac{1}{\sqrt{2\pi\kappa^2}}  \\ \times\int_{\frac{\cos^2{\left(\frac{\pi}{\mu}\right)}\zeta_{u}}{z V^2}}^{\frac{\zeta_{u}}{z V^2}}\sqrt{\frac{\tilde{\beta}V^2}{z\zeta_{u}-z^2\tilde{\beta}V^2}} 
    e^{-\frac{\left(\tilde{\beta}-\omega-\frac{\bar{K}}{2\Gamma}\right)^2}{2\kappa^2}}d\tilde{\beta}.
    \end{multline}
\end{theorem}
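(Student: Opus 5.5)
The plan is to use the standard formula for the density of a ratio of two independent nonnegative random variables. Independence of $\alpha_{\mathrm{I}}$ and $\tilde{\beta_{\mathrm{I}}}$ comes from Remark \ref{Remark:independence}. There, the conditional CDF of each $Y_{\tilde{u}}$ given $t$ does not depend on $t$, and $\alpha_{\mathrm{I}}$ is a function of $t$ alone by Theorem \ref{Theorem:Simplified_Signal_Power}. The phases $\psi_{\tilde{u}}$ are independent across users. Hence $\beta_{\mathrm{I}}$, and therefore $\tilde{\beta_{\mathrm{I}}}=\beta_{\mathrm{I}}+\frac{\bar{K}}{2\Gamma}$ (a deterministic shift), is independent of $\alpha_{\mathrm{I}}$. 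This lets the joint density factor as $f_{\alpha_{\mathrm{I}}}(\alpha) f_{\tilde{\beta_{\mathrm{I}}}}(\tilde{\beta})$, with the two factors taken from Theorem \ref{Theorem_Distribution_Signal} and Corollary \ref{Coro:PDF_Interference_Plus_Noise}.

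Next, I write the CDF as
\begin{align}
F_{Z_{\mathrm{I}}}(z)=\Pr\left(\alpha_{\mathrm{I}}\le z\tilde{\beta_{\mathrm{I}}}\right)=\int f_{\tilde{\beta_{\mathrm{I}}}}(\tilde{\beta})\,F_{\alpha_{\mathrm{I}}}(z\tilde{\beta})\,d\tilde{\beta}.
\end{align}
Differentiating in $z$ gives the standard ratio formula
\begin{align}
f_{Z_{\mathrm{I}}}(z)=\int \tilde{\beta}\, f_{\alpha_{\mathrm{I}}}(z\tilde{\beta})\, f_{\tilde{\beta_{\mathrm{I}}}}(\tilde{\beta})\,d\tilde{\beta}.
\end{align}
The integrand is nonzero only where $z\tilde{\beta}$ lies in the support of $\alpha_{\mathrm{I}}$, namely $\left[\frac{\cos^2(\pi/\mu)\zeta_u}{V^2},\frac{\zeta_u}{V^2}\right]$. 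This restricts $\tilde{\beta}$ to $\left[\frac{\cos^2(\pi/\mu)\zeta_u}{zV^2},\frac{\zeta_u}{zV^2}\right]$, which gives exactly the integration limits in \eqref{Eq:PDF_Z}. The support of $\tilde{\beta_{\mathrm{I}}}$ (truncation at zero) imposes no further restriction for $z>0$, since these limits are already positive.

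Finally, I substitute the two densities. The signal density gives
\begin{align}
\tilde{\beta}\,\frac{\mu}{2\pi}\sqrt{\frac{V^2}{\zeta_u z\tilde{\beta}-V^2z^2\tilde{\beta}^2}}.
\end{align}
Moving $\tilde{\beta}$ inside the square root as $\tilde{\beta}^2$ and cancelling one factor of $\tilde{\beta}$ turns this into
\begin{align}
\frac{\mu}{2\pi}\sqrt{\frac{\tilde{\beta}V^2}{z\zeta_u-z^2\tilde{\beta}V^2}}.
\end{align}
Multiplying by the truncated Gaussian factor from \eqref{Eq:PDF_Interference_Plus_Noise} and pulling the constants $\frac{1}{\Phi(\omega/\kappa)}\frac{1}{\sqrt{2\pi\kappa^2}}$ outside the integral yields \eqref{Eq:PDF_Z}.

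The main obstacle is justifying independence rigorously, which the rest of the argument depends on. Remark \ref{Remark:independence} asserts it only through the conditional CDF of a single $Y_{\tilde{u}}$. I would first prove independence for each $Y_{\tilde{u}}$ from that conditional CDF. I would then extend it to the sum by noting that the $\psi_{\tilde{u}}$ are mutually independent and independent of $\psi_u$, and that conditioned on $t$ the $Y_{\tilde{u}}$ stay mutually independent with $t$-free laws. A secondary subtlety is the endpoint singularity of $f_{\alpha_{\mathrm{I}}}$ at $\alpha=\zeta_u/V^2$. It is integrable, being of inverse-square-root type, so differentiating under the integral in the CDF step is legitimate.
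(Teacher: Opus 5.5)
Your route is the paper's own (Appendix~\ref{Appendix:Proof_PDF_Z}): the ratio-of-densities formula, factorisation of the joint density through Remark~\ref{Remark:independence}, integration limits read off from the support of $\alpha_{\mathrm{I}}$, and the same manipulation under the square root. Deriving the ratio formula by differentiating the CDF is sound. So is extending independence from one $Y_{\tilde{u}}$ to the whole sum via conditional independence given $t$.

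\textbf{The gap.} The step that fails is your claim that $\tilde{\beta_{\mathrm{I}}}$ is truncated at zero and so imposes no restriction. Because $\beta_{\mathrm{I}}$ is truncated at $0$, the shifted variable $\tilde{\beta_{\mathrm{I}}}=\beta_{\mathrm{I}}+\frac{\bar{K}}{2\Gamma}$ is supported on $\left[\frac{\bar{K}}{2\Gamma},\infty\right)$. This is exactly the interval on which the constant $1/\Phi(\omega/\kappa)$ in \eqref{Eq:PDF_Interference_Plus_Noise} normalises the Gaussian. It is also the lower limit the paper itself uses in \eqref{OP_Single_Integral_Derivation}. The density of $\tilde{\beta_{\mathrm{I}}}$ therefore carries an indicator of $\tilde{\beta}\ge\frac{\bar{K}}{2\Gamma}$. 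The correct lower limit is $\max\left\{\frac{\cos^2(\pi/\mu)\zeta_u}{zV^2},\frac{\bar{K}}{2\Gamma}\right\}$, and $f_{Z_{\mathrm{I}}}(z)=0$ for $z\ge\frac{2\Gamma\zeta_u}{\bar{K}V^2}$.

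Consequently \eqref{Eq:PDF_Z} as stated is exact only for $z\le\frac{2\Gamma\cos^2(\pi/\mu)\zeta_u}{\bar{K}V^2}$. Above that threshold it integrates the Gaussian expression over values of $\tilde{\beta}$ that carry no probability, and so overestimates the density. A short check confirms this: integrate the right-hand side of \eqref{Eq:PDF_Z} over $z\in(0,\infty)$ and exchange the order of integration. The result is $\Phi\left(\frac{\omega+\bar{K}/(2\Gamma)}{\kappa}\right)/\Phi\left(\frac{\omega}{\kappa}\right)>1$, so it cannot be a density.

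The paper's derivation makes the same slip, and the numerical effect is small: the spurious mass is at most $\left(1-\Phi(\omega/\kappa)\right)/\Phi(\omega/\kappa)$. Still, your final step as written asserts something false. To fix it, either restrict the conclusion to the range of $z$ above, or keep the $\max$ in the lower limit.
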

\begin{proof}
    See Appendix \ref{Appendix:Proof_PDF_Z}.
\end{proof}

Theorem \ref{Theorem:PDF_Z} provide the exact PDF of of the SINR. However, it is given in an integral-form. Considering a very compact fluid antenna, we provide the simplified result as follows.

\begin{theorem}\label{Theorem:PDF_SINR_Large_mu}
    Given a sufficiently large $\mu$, the PDF of the SINR variable $Z_\mathrm{I} = \frac{\alpha_{\mathrm{I}}}{\tilde{\beta_{\mathrm{I}}}}$ can be approximated as
    \begin{align}\label{Eq:PDF_Z_larger_mu}
        f_{Z_{\mathrm{I}}}(z) = \frac{\zeta_{u}}{\Phi{\left(\frac{\omega}{\kappa}\right)}V^2}\frac{1}{z^2\sqrt{2\pi\kappa^2}}e^{-\frac{\left(\frac{\zeta_{u}}{zV^2}-\omega-\frac{\bar{K}}{2\Gamma}\right)^2}{2\kappa^2}}.
\end{align}
\end{theorem}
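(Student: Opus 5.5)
The plan is to use Remark~\ref{Remark:Constant_Signal}: for sufficiently large $\mu$ the signal power is effectively deterministic, $\alpha_{\mathrm{I}} \approx \frac{\zeta_{u}}{V^2}$. The SINR $Z_{\mathrm{I}}$ then becomes a monotone transformation of the single random variable $\tilde{\beta_{\mathrm{I}}}$. Its PDF follows from the change-of-variables formula applied to the truncated Gaussian density of Corollary~\ref{Coro:PDF_Interference_Plus_Noise}, with no integration over $\alpha$ needed.

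\textbf{Step 1 (degenerate signal law).} The support of $f_{\alpha_{\mathrm{I}}}$ in Theorem~\ref{Theorem_Distribution_Signal} is $\big[\cos^2(\frac{\pi}{\mu})\frac{\zeta_u}{V^2},\frac{\zeta_u}{V^2}\big]$, and its width $\sin^2(\frac{\pi}{\mu})\frac{\zeta_u}{V^2}$ is $O(\mu^{-2})$ relative to its endpoints. I will argue that, as $\mu$ grows, $\alpha_{\mathrm{I}}$ converges in distribution to the point mass at $a_0=\zeta_u/V^2$. Here $V$ is held at its large-$\mu$ value from Remark~\ref{Remak_V}, so $a_0$ is a fixed scale for the given $K$.

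As a consistency check, the same conclusion comes out of Theorem~\ref{Theorem:PDF_Z}. The integration interval in \eqref{Eq:PDF_Z} shrinks to the point $\tilde\beta_0=\zeta_u/(zV^2)$, where the Gaussian factor is essentially constant. The remaining integral
\[
\int \frac{\mu}{2\pi}\sqrt{\frac{\tilde\beta V^2}{z\zeta_u - z^2\tilde\beta V^2}}\,d\tilde\beta
\]
is, via the substitution $\alpha = z\tilde\beta$, exactly $\frac{1}{z}\cdot\tilde\beta\, f_{\alpha_{\mathrm{I}}}$-type mass. Because $f_{\alpha_{\mathrm{I}}}$ integrates to one, this gives $\tilde\beta_0/z = \zeta_u/(z^2V^2)$.

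\textbf{Step 2 (transformation).} With $Z_{\mathrm{I}} = a_0/\tilde{\beta_{\mathrm{I}}}$ and $\tilde{\beta_{\mathrm{I}}}>0$, the map is strictly decreasing. The inverse is $\tilde\beta = a_0/z$, with Jacobian $|d\tilde\beta/dz| = a_0/z^2$. Therefore
\[
f_{Z_{\mathrm{I}}}(z) = \frac{a_0}{z^2}\,f_{\tilde{\beta_{\mathrm{I}}}}\!\left(\frac{a_0}{z}\right).
\]
Substituting \eqref{Eq:PDF_Interference_Plus_Noise} with $a_0=\zeta_u/V^2$ yields \eqref{Eq:PDF_Z_larger_mu} directly, including the prefactor $\frac{1}{\Phi(\omega/\kappa)\sqrt{2\pi\kappa^2}}$.

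\textbf{Main obstacle.} The delicate part is justifying the limit in Step 1, because $f_{\alpha_{\mathrm{I}}}$ has integrable singularities at both endpoints of its shrinking support. I will handle this in two stages. First, note that $f_{\tilde{\beta_{\mathrm{I}}}}$ is continuous and bounded. Second, write the exact density as $\mathbb{E}_{\alpha}\!\left[\frac{\alpha}{z^2}f_{\tilde\beta}(\alpha/z)\right]$ and bound its deviation from $\frac{a_0}{z^2}f_{\tilde\beta}(a_0/z)$ by the Lipschitz modulus of $\alpha\mapsto \alpha f_{\tilde\beta}(\alpha/z)$ times the support width $O(\mu^{-2})$. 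This avoids the endpoint singularities entirely. The support constraint of the truncated Gaussian, $\tilde\beta \ge \frac{\bar K}{2\Gamma}$, carries over as the corresponding upper bound on $z$, and I will note it alongside the formula.
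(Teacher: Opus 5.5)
Your proposal is correct and follows the paper's own proof: the paper likewise invokes Remark~\ref{Remark:Constant_Signal} to replace $\alpha_{\mathrm{I}}$ by the constant $\zeta_u/V^2$. It then applies the one-variable transformation $f_{Z_{\mathrm{I}}}(z)=\frac{\zeta_u}{z^2V^2}f_{\tilde{\beta_{\mathrm{I}}}}\big(\frac{\zeta_u}{zV^2}\big)$ to the truncated Gaussian of Corollary~\ref{Coro:PDF_Interference_Plus_Noise}, exactly as in your Step~2. Your Step~1 justification (writing the exact density as $\mathrm{E}_{\alpha}\big[\frac{\alpha}{z^2}f_{\tilde{\beta_{\mathrm{I}}}}(\alpha/z)\big]$ and using a Lipschitz bound) goes beyond the paper, which simply asserts the approximation; note, however, that $f_{\tilde{\beta_{\mathrm{I}}}}$ is not continuous but jumps at the truncation point $\frac{\bar{K}}{2\Gamma}$, so that bound holds only for $z$ outside the window $\big[\cos^2\big(\frac{\pi}{\mu}\big)z_0,\,z_0\big]$ with $z_0=\frac{2\Gamma\zeta_u}{\bar{K}V^2}$.
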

\begin{proof}
    According to Remark \ref{Remark:Constant_Signal}, for a sufficiently large $\mu$, the CUMA in-phase signal power approaches a constant, given by $\alpha_{\mathrm{I}} = \frac{\zeta_{u}}{V^2}$. Subsequently, the PDF of $Z_{\mathrm{I}} = \frac{\alpha_{\mathrm{I}}}{\tilde{\beta_{\mathrm{I}}}} = \frac{\zeta_{u}}{V^2\tilde{\beta_{\mathrm{I}}}}$ can be derived as 
    $f_{Z_{\mathrm{I}}}(z) = f_{\tilde{\beta_{\mathrm{I}}}}\left(\frac{\zeta_{u}}{zV^2}\right)\left|\frac{d{\tilde{\beta_{\mathrm{I}}}}}{dz_{\mathrm{I}}}\right|$, where $f_{\tilde{\beta_{\mathrm{I}}}}(\cdot)$ is provided in Corollary \ref{Coro:PDF_Interference_Plus_Noise}.
\end{proof}

\subsection{Performance Metrics}
Using the provided PDF expressions of the SINR, we now proceed to derive some performance metrics. We first have the following remarks.
\begin{remarkbox}
    The mean of SINR variable $Z_{\mathrm{I}}$ is given by
    \begin{align}\label{eq:Mean_SINR}
    \mathrm{E}\left[Z_{\mathrm{I}}\right] = \int_{\frac{\cos^2{(\frac{\pi}{\mu})\zeta_{u}}}{(U-1)\cos^2{(\frac{\pi}{\mu})\zeta_{u}}+\Gamma V^2}}^{\frac{\zeta_{u}}{\Gamma V^2}}z f_{Z_{\mathrm{I}}}(z)dz,
    \end{align}
    where $f_{Z_{\mathrm{I}}}(z)$ can be chosen from either \eqref{Eq:PDF_Z} or \eqref{Eq:PDF_Z_larger_mu}.
\end{remarkbox}
\begin{remarkbox}
    The mean of SNR variable $S_{\mathrm{I}} = \frac{2\Gamma}{\bar{K}}\alpha_{\mathrm{I}}$ is given by
    \begin{align}
    \mathrm{E}\left[S_{\mathrm{I}}\right] = \frac{4\Gamma}{K-1}\int_{\frac{\cos^2{\left(\frac{\pi}{\mu}\right)}\zeta_{u}}{V^2}}^{\frac{\zeta_{u}}{V^2}}\alpha f_{\alpha_{\mathrm{I}}}(\alpha)d\alpha,
    \end{align}
    where $f_{\alpha_{\mathrm{I}}}(\alpha)$ is given in \eqref{Eq:PDF_Signal_Power}.
\end{remarkbox}

It is worth noting that the exact PDF of $Z_{\mathrm{I}}$ is provided in an integral-form. Consequently, deriving the exact outage probability of the CUMA network directly from $Z_{\mathrm{I}}$ would result in a double integral expression. To simplify the expression for the exact outage probability, the following corollary evaluates it in a single integral-form.

\begin{corollarybox}\label{Coro:OP_Single_Integral}
    In satellite CUMA network, the exact outage probability of the typical user is given by \eqref{Eq:OP_Single_Integral}, shown at the top of the next page, where $\gamma$ is the threshold and $\Phi(\cdot)$ is given in Theorem \ref{Theorem:Interference_CIT}.
\end{corollarybox}
    \begin{figure*}
     \begin{align}\label{Eq:OP_Single_Integral}
        {\mathcal O}_{\mathrm{I}}\left(\gamma\right) = \frac{1}{\Phi{\left(\frac{\omega}{\kappa}\right)}}-\frac{\mu}{2\pi\Phi{\left(\frac{\omega}{\kappa}\right)}}\int_{\alpha=\frac{\cos^2{\left(\frac{\pi}{\mu}\right)}\zeta_{u}}{V^2}}^{\frac{\zeta_{u}}{V^2}}\sqrt{\frac{V^2}{\zeta_{u}\alpha-V^2\alpha^2}}\Phi{\left(\frac{\frac{\alpha}{\gamma}-\omega-\frac{\bar{K}}{2\Gamma}}{\kappa}\right)}d\alpha.
    \end{align}
        \hrulefill
    \end{figure*}
\begin{proof}
    See Appendix \ref{Appendix:Proof_OP_Single_Integral}.
\end{proof}

The exact outage probability is now expressed in a single integral-form. For expression simplicity and to provide further insights, we have the following corollary, which evaluates the approximated outage probability in a closed form.
\begin{corollarybox}\label{Coro:OP_Large_mu}
    In satellite CUMA network, given a sufficiently large $\mu$, the outage probability of the typical user can be approximated as
    \begin{align}\label{Eq:OP_Large_mu}
        {\mathcal O}_{\mathrm{I}}\left(\gamma\right) = \frac{1-\Phi\left(\frac{\zeta_{u}}{\gamma\kappa V^2}-\frac{\bar{K}}{2\kappa\Gamma}-\frac{\omega}{\kappa}\right)}{\Phi\left(\frac{\omega}{\kappa}\right)}.
    \end{align}
    where $\Phi(\cdot)$ is given in Theorem \ref{Theorem:Interference_CIT}.
\end{corollarybox}
\begin{proof}
    This result can be derived based on Theorem \ref{Theorem:PDF_SINR_Large_mu}, due to space limit is omitted here.
\end{proof}

\begin{corollarybox}
    The ergodic rate of the satellite CUMA network can be expressed as \begin{multline}\label{Eq:Ergodic_Rate_Exactly}
         C_{\mathrm{e}} = \frac{UB}{\ln{(2)}}\int_{0}^{\infty}\frac{1}{1+y}\big(1- {\mathcal O}_{\mathrm{I}}\left(y\right)\big)dy~~({\rm bits/sec}),
    \end{multline}
    where ${\mathcal O}_{\mathrm{I}}\left(\cdot\right)$ is the outage probability that can be chosen from the expression \eqref{Eq:OP_Single_Integral} or \eqref{Eq:OP_Large_mu}.
\end{corollarybox}
\begin{proof}
    According to \cite[Lemma 1]{Performace_Limits_FAS}, the ergodic capacity can be obtained as
\begin{align}\label{Eq:Lemma_1_Cite}
    C_{\mathrm{e}} = \frac{UB}{\ln{(2)}}\int_{0}^{\infty}\frac{1}{1+y}\mathbb{P} \left\{ z> y\right\}dy.
\end{align}
Subsequently, substituting $\mathbb{P} \left\{ z> y\right\} = 1- \mathbb{P} \left\{ z\leq y\right\}$ and employing the result from either Corollary \ref{Coro:OP_Single_Integral} or Corollary \ref{Coro:OP_Large_mu} into \eqref{Eq:Lemma_1_Cite}, we obtain \eqref{Eq:Ergodic_Rate_Exactly}.
\end{proof}

\section{Physical Insights}\label{Sec:Physical_Insights}
In this section, we aim to gain insights into the satellite CUMA network, including the beamforming gain it provides and its performance compared to conventional MRC. Subsequently, we extened the presented results to discuss other port activation strategies. Different from the previous CUMA network in \cite{CUMA}, which suggests using two RF chains to exploited CUMA's potential, we demonstrate that, with a sufficiently large $K$, a single RF chain is sufficient to fully realize the potential of the satellite CUMA network.

\subsection{Beamforming Gain}\label{Discussion_beamfroming_gain}
Recall the SINR expression in \eqref{SINR_In_Phase_K_1}, with the results provided in Theorem \ref{Theorem:Simplified_Signal_Power} and \ref{Theorem:Interference_Expression}, the SINR expression can be expressed as
\begin{align}\label{Eq:SINR_exact}
    &{\mathrm{SINR}}_{\mathrm{I}}^{\mathcal{K}_1} = \\ &\frac{\frac{4}{\mu}\frac{W}{\sin^2{\left(\frac{\pi}{\mu}\right)}}\zeta_{u}\cos^2{\left(-2\pi t-\frac{\pi}{\mu}+\frac{2\pi}{\mu}\left\lceil t\mu \right\rceil\right)}}{\frac{4}{\mu}\frac{W}{\sin^2{\left(\frac{\pi}{\mu}\right)}}\sum_{\substack{\tilde {u}=1\\ \tilde {u}\ne u}}^{U}\zeta_{\tilde{u}}\sin^2{\left(\psi_{\tilde{u}}-\frac{\pi}{\mu}+\frac{2\pi}{\mu}\left\lceil t\mu \right\rceil\right)}+\frac{1}{\Gamma}}.\nonumber
\end{align}
However, the beamforming gain for signal is hard to be characterizes according to \eqref{Eq:SINR_exact}. Consequently, consider a compact fluid antenna configuration, along with the results presented in Lemma \ref{Lemma:Number_Activated_port}, Remark \ref{Remak_V} and \ref{Remark:Constant_Signal}, the SINR expression for the satellite CUMA network can be approximated as
\begin{align}\label{Eq:SINR_Lagr_mu}
    {\mathrm{SINR}}_{\mathrm{I}}^{\mathcal{K}_1} \approx \frac{4\zeta_{u}\frac{K-1}{\pi^2}}{4\frac{K-1}{\pi^2}\sum_{\substack{\tilde {u}=1\\ \tilde {u}\ne u}}^{U}\zeta_{\tilde{u}}\sin^2{\left(\psi_{\tilde{u}}-\frac{\pi}{\mu}+\frac{2\pi}{\mu}\left\lceil t\mu \right\rceil\right)}+\frac{1}{\Gamma}}.
\end{align}

As a performance benchmark, under the LoS scenario, the SINR expression for MRC equipped with $M$ antennas (RF chains) can be derived through \cite[(4.31)]{Emil_Book}, expressed as
\begin{align}\label{Eq:MRC_SINR}
{\mathrm{SINR}}_{\mathrm{MRC}} &= \frac{M\zeta_{u}}{M\sum_{\substack{\tilde {u}=1\\ \tilde {u}\ne u}}^{U}\zeta_{\tilde{u}}\left|e^{-j\psi_{u}+j\psi_{\tilde{u}}}\right|^2+\frac{1}{\Gamma}} \nonumber \\
& = \frac{M\zeta_{u}}{M\sum_{\substack{\tilde {u}=1\\ \tilde {u}\ne u}}^{U}\zeta_{\tilde{u}}+\frac{1}{\Gamma}}.
\end{align}

From \eqref{Eq:SINR_Lagr_mu}, it can be observed that, in the case of CUMA, a beamforming gain (a.k.a power gain) of $4\frac{K-1}{\pi^2}$ is achieved for the desired signal, which increases linearly with the number of ports $K$. K. According to \eqref{Eq:MRC_SINR}, the beamforming gain of desired signal achieved by MRC is $M$, the same as the number of antennas, as well as the number of RF chains, which is not only expensive but also heavy for satellite communications.

Moreover, under the considered scenario, MRC provides a beamforming gain of $M$ to the interference, identical to that for signal. It indicates that MRC cannot distinguish the difference of signal and interferences if the angle of arrival exhibits similarity. In other words, as MRC is also known as the spatial filter,  it can only suppress interference that can be spatially distinguished from the signal, as discussed in \cite{Emil_Book}. This indicates that, for a satellite communications scenario, especially for high-orbit satellite communications, where the user may be seen as spatially compact, the use of MRC on uplink communication only suppresses noise power when using multiple antennas. In contrast, the beamforming gain for interfering user $\tilde{u}$, provided by CUMA, is  $4\frac{K-1}{\pi^2}\sin^2{\left(\psi_{\tilde{u}}-\frac{\pi}{\mu}+\frac{2\pi}{\mu}\left\lceil t\mu \right\rceil\right)}$, for which the precise value is determined by the phase difference of signal and interference observed at the reference port. This performance difference is due to that satellite CUMA builds upon the absolute phase difference between each ports, while MRC builds upon the relative phase difference between each antennas.

Given the preceding discussion, we now present the following corollary, which compares the beamforming gain provided by CUMA and MRC.
\begin{corollarybox}\label{Coro:CUMA_Outperform_MRC}
    In the considered system, CUMA outperforms MRC when
    \begin{align}\label{Eq:CUMA_Outperform_MRC}
     K > \max\left\{\left\lceil\frac{\frac{\pi^2}{4}M}{M\Gamma\sum_{\substack{\tilde {u}=1\\ \tilde {u}\ne u}}^{U}\zeta_{\tilde{u}}\delta_{\tilde{u}}+1}\right\rceil,
     \Big\lceil \epsilon W \Big\rceil\right\}+1,
    \end{align}
    where $\delta_{\tilde{u}} \triangleq 1-\sin^2{\left(\psi_{\tilde{u}}-\frac{\pi}{\mu}+\frac{2\pi}{\mu}\left\lceil t\mu \right\rceil\right)}$, $\epsilon$ is the pre-set threshold for the fluid antenna port density $\mu$.
\end{corollarybox}
\begin{proof}
    By utilizing \eqref{Eq:SINR_Lagr_mu} and \eqref{Eq:MRC_SINR} and under the condition that $\mu \geq \epsilon$, this result is obtained.
\end{proof}

Note the precise value for $K$ that CUMA outperforms MRC is related to the interference term, as reflected in $\delta_{\tilde{u}}$ and $\zeta_{\tilde{u}}$. To provide more insights into the performance difference between CUMA and MRC, we set $\zeta_{u} = \zeta_{\tilde{u}}$ in the following discussion. The following remarks present the exact value of $K$ at which CUMA outperforms MRC under different scenarios.

\begin{remarkbox}\label{Remark:CUMA_Outperform_MRC_Noise}
    In the worst-case for CUMA, the phase of the interference aligns with that of the signal, i.e., $\psi_{u} = \psi_{\tilde{u}}$, resulting in $\delta_{\tilde{u}} = 1$. CUMA outperforms MRC under the condition of
    \begin{align}
     K > \max\left\{\left\lceil\frac{\pi^2}{4}M\right\rceil, \Big\lceil \epsilon W \Big\rceil\right\}+1.
    \end{align}
    This is also the condition for CUMA to outperform MRC under a noise-limited scenario.
\end{remarkbox}

Remark \ref{Remark:CUMA_Outperform_MRC_Noise} also corresponds to the special case of a noise-limited scenario, which is particularly relevant in satellite communications due to the extremely large propagation distance. The following remark addresses the scenario dominated by a LoS propagation path and an interference-limited environment, which is commonly encountered in short-range communication systems operating in the millimeter-wave and higher frequency bands.

\begin{remarkbox}\label{Remark:CUMA_Outperform_MRC_Interference}
In a LoS and interference-limited scenario, CUMA consistently outperforms MRC under the compact fluid antenna condition, i.e.,
\begin{align}\label{Eq:CUMA_Outperform_MRC}
K > \Big\lceil \epsilon W \Big\rceil + 1,
\end{align}
which results from the fact that $\delta_{\tilde{u}} \leq 1$.
\end{remarkbox}

It is worth noting that the conditions outlined in Corollary \ref{Coro:CUMA_Outperform_MRC}, Remark \ref{Remark:CUMA_Outperform_MRC_Noise} and Remark \ref{Remark:CUMA_Outperform_MRC_Interference} for CUMA to outperform MRC are based on the compact fluid antenna configuration. In practice, however, CUMA may still outperform MRC without the compact assumption.

\subsection{Compact or non-Compact Fluid Antenna?}
 Despite the aforementioned discussion provides the precise condition under which CUMA outperforms MRC, in terms of beamforming gain, it is built upon the compact fluid antenna assumption, i.e., sufficient large $\mu$. Thus, it can not provide the insights into whether compact fluid antenna is better, or non-compact fluid antenna is better in the considered system. To answer this question, here, we focus on the stochastic point of view, and evaluate the performance gap between the received CUMA signal and interference power from each user. To begin with, we have the following definition check the first-order stochastic dominance.
\begin{definition}[First-Order Stochastic Dominance \textnormal{\cite[Definition 2]{FSD}}]\label{FSD_Definition}
Let $\mathcal{G}$ denote a set of probability distributions over income, each with a CDF. For all $g', g'' \in \mathcal{G}$, we say that $g'$ \emph{first-order stochastically dominates} $g''$, denoted $g' \succeq_{\mathrm{FSD}} g''$, if and only if their corresponding cumulative distribution functions $G'(y)$ and $G''(y)$ satisfy
\begin{align}
    G'(y) \leq G''(y) \quad \text{for all } y \in \mathbb{R},
\end{align}
with strict inequality for some $y \in \mathbb{R}$.
\end{definition}

It is evident that the first-order stochastic dominance is related to the CDF difference between random variables. We found the following lemma useful in deriving the CDF difference between the received signal power $\alpha_{\mathrm{I}}$ and the received interference power from user $\tilde{u}$, $Y_{\tilde{u}}$.
\begin{lemma}\label{lemma_rewrite_PDF}
    The PDF of the received CUMA signal power, $\alpha_{\mathrm{I}}$, can be rewritten in term of the PDF of interference power from user $\tilde{u}$, as
\begin{equation}
\begin{aligned}
f_{\alpha_{\mathrm{I}}}(\alpha) =
& \begin{cases}
\displaystyle
0,&  0 < Y \leq \frac{\cos^2\left(\frac{\pi}{\mu}\right)\zeta_u}{V^2}, \\[10pt]
\displaystyle
\frac{\mu}{2} f_{Y_{\tilde{u}}}(y)dy,&  \frac{\cos^2\left(\frac{\pi}{\mu}\right)\zeta_u}{V^2} < Y < \frac{\zeta_u}{V^2}.
\end{cases}
\end{aligned}
\end{equation}
\end{lemma}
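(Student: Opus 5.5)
The plan is to compare the two closed-form densities already available: Theorem \ref{Theorem_Distribution_Signal} for $\alpha_{\mathrm{I}}$ and Corollary \ref{Coro_PDF_Intereference_each_user} for $Y_{\tilde{u}}$. Both are built from the same kernel $\sqrt{V^2/(\zeta_u x - V^2x^2)}$, evaluated at a common argument $x$ (so that $\alpha$ and $y$ play the same role in the statement). The lemma should then reduce to comparing prefactors and supports. Throughout I assume, as in Corollary \ref{Coro_PDF_Intereference_each_user} and in the discussion preceding Corollary \ref{Coro:CUMA_Outperform_MRC}, that $\zeta_{\tilde{u}}=\zeta_u$, so the two kernels coincide.

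\textbf{Interior of the support.} Fix $x$ with $\cos^2(\pi/\mu)\zeta_u/V^2 < x < \zeta_u/V^2$. By Corollary \ref{Coro_PDF_Intereference_each_user},
\begin{align*}
f_{Y_{\tilde{u}}}(x) = \frac{1}{\pi}\sqrt{\frac{V^2}{\zeta_u x - V^2 x^2}}.
\end{align*}
By Theorem \ref{Theorem_Distribution_Signal}, $f_{\alpha_{\mathrm{I}}}(x)$ is the same square-root kernel multiplied by $\mu/(2\pi)$ instead of $1/\pi$. Dividing the two prefactors gives
\begin{align*}
f_{\alpha_{\mathrm{I}}}(x) = \frac{\mu}{2} f_{Y_{\tilde{u}}}(x),
\end{align*}
which is the second branch of the lemma. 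The trailing ``$dy$'' in the statement I read as notation for the probability element $f_{\alpha_{\mathrm{I}}}(\alpha)d\alpha = \frac{\mu}{2}f_{Y_{\tilde{u}}}(y)dy$ under the identification $\alpha=y$.

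\textbf{Below the support.} For $0<x\le \cos^2(\pi/\mu)\zeta_u/V^2$, Theorem \ref{Theorem_Distribution_Signal} states that $\alpha_{\mathrm{I}}$ takes values only in $[\cos^2(\pi/\mu)\zeta_u/V^2,\zeta_u/V^2]$. So $f_{\alpha_{\mathrm{I}}}$ vanishes on this range, giving the first branch. The endpoint is a single point and therefore a null set, so assigning it to the zero branch does not matter.

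\textbf{Consistency check.} To confirm that nothing is lost, I will verify that $\frac{\mu}{2}\int f_{Y_{\tilde{u}}}$ over the interval $(\cos^2(\pi/\mu)\zeta_u/V^2,\zeta_u/V^2)$ equals one. From the CDF in \eqref{Contional_PDF_Interference}, this integral is
\begin{align*}
\frac{\mu}{2}\cdot\frac{\arccos\!\left(2\cos^2(\pi/\mu)-1\right)}{\pi} = \frac{\mu}{2}\cdot\frac{2\pi/\mu}{\pi} = 1,
\end{align*}
since $2\cos^2(\pi/\mu)-1=\cos(2\pi/\mu)$. This confirms the piecewise form is a valid density.

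\textbf{Main obstacle.} The only subtle point is the interpretation: the common argument, the $dy$, and the assumption $\zeta_{\tilde{u}}=\zeta_u$ needed to equate the kernels. If $\zeta_{\tilde{u}}\ne\zeta_u$, the kernels differ and the identity holds only after rescaling the argument. I would state this assumption explicitly.
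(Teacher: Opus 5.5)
Your proposal is correct and takes the same route as the paper, whose proof just reads off the two densities from Theorem~\ref{Theorem_Distribution_Signal} and Corollary~\ref{Coro_PDF_Intereference_each_user}. You compare the prefactors $\mu/(2\pi)$ and $1/\pi$ and the supports exactly as the paper does, and your added points — the explicit assumption $\zeta_{\tilde{u}}=\zeta_u$ (which the paper makes only implicitly by writing $\zeta_u$ in Corollary~\ref{Coro_PDF_Intereference_each_user}) and the normalization check using $\arccos(\cos(2\pi/\mu))=2\pi/\mu$ for $\mu\ge 2$ — are both sound.
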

\begin{proof}
    This is the result of Theorem \ref{Theorem_Distribution_Signal} and Corollary \ref{Coro_PDF_Intereference_each_user}.
\end{proof}

Consequently, the following theorem specific the difference between the CDF of CUMA signal power, $\alpha_{\mathrm{I}}$ and the interference power from user $\tilde{u}$, $Y_{\tilde{u}}$.
\begin{theorem}\label{Theorem_CDF_Dif}
Let $F_{\alpha_{\mathrm{I}}}(\cdot)$ and $F_{Y_{\tilde{u}}}(\cdot)$ denoted the CDF of the received CUMA signal power, and interference power from user $\tilde{u}$, respectively. The difference between $F_{\alpha_{\mathrm{I}}}(\cdot)$ and $F_{Y_{\tilde{u}}}(\cdot)$ is given by
\begin{equation}\label{eq_CDF_Dif}
\begin{aligned}
&F_{Y_{\tilde{u}}}(Y)-F_{\alpha_{\mathrm{I}}}(Y) =  \\
& \begin{cases}
\displaystyle
\int_{0}^{Y} f_{Y_{\tilde{u}}}(y)dy, & 0 < Y \leq \frac{\cos^2\left(\frac{\pi}{\mu}\right)\zeta_u}{V^2}, \\[10pt]
\displaystyle
\left(\frac{\mu}{2}-1\right) \int_{\frac{\cos^2\left(\frac{\pi}{\mu}\right)\zeta_u}{V^2}}^{\frac{\zeta_u}{V^2} - Y} f_{Y_{\tilde{u}}}(y)dy, & \frac{\cos^2\left(\frac{\pi}{\mu}\right)\zeta_u}{V^2} < Y < \frac{\zeta_u}{V^2},
\end{cases}
\end{aligned}
\end{equation}
where $Y$ is the CDF threshold and $f_{Y_{\tilde{u}}}(y)$ is the PDF expression given in \eqref{PDF_interference_power_each_user}. This difference is strictly non-negative due to the fact $\mu\geq2$.
\end{theorem}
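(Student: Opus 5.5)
Write $a \triangleq \frac{\cos^2\left(\frac{\pi}{\mu}\right)\zeta_u}{V^2}$ and $b \triangleq \frac{\zeta_u}{V^2}$, and take $\zeta_{\tilde u}=\zeta_u$ so that both densities live on $[0,b]$. The plan is to compute both CDFs directly from the densities, using the relation between them in Lemma \ref{lemma_rewrite_PDF}, and then split into the two ranges of $Y$. By Corollary \ref{Coro_PDF_Intereference_each_user}, $F_{Y_{\tilde u}}(Y)=\int_0^Y f_{Y_{\tilde u}}(y)\,dy$. By Theorem \ref{Theorem_Distribution_Signal} and Lemma \ref{lemma_rewrite_PDF}, $f_{\alpha_{\mathrm I}}$ vanishes on $(0,a]$ and equals $\frac{\mu}{2}f_{Y_{\tilde u}}$ on $(a,b)$.

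For $0<Y\le a$ we have $F_{\alpha_{\mathrm I}}(Y)=0$. The difference is therefore $\int_0^Y f_{Y_{\tilde u}}(y)\,dy$, which is the first case.

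For $a<Y<b$, first split $F_{Y_{\tilde u}}(Y)=\int_0^a f_{Y_{\tilde u}}+\int_a^Y f_{Y_{\tilde u}}$ and write $F_{\alpha_{\mathrm I}}(Y)=\frac{\mu}{2}\int_a^Y f_{Y_{\tilde u}}$. This gives
\begin{align}
F_{Y_{\tilde u}}(Y)-F_{\alpha_{\mathrm I}}(Y)=\int_0^a f_{Y_{\tilde u}}(y)\,dy-\left(\frac{\mu}{2}-1\right)\int_a^Y f_{Y_{\tilde u}}(y)\,dy. \nonumber
\end{align}
The key step is to evaluate the mass of $f_{Y_{\tilde u}}$ on $[a,b]$. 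Two routes give the same answer. One is normalization of $f_{\alpha_{\mathrm I}}$, which gives $\frac{\mu}{2}\int_a^b f_{Y_{\tilde u}}=1$. The other is the arcsine CDF \eqref{Contional_PDF_Interference}: $\int_a^b f_{Y_{\tilde u}}=\frac{1}{\pi}\arccos\left(2\cos^2\left(\frac{\pi}{\mu}\right)-1\right)=\frac{2}{\mu}$. Either way, $\int_0^a f_{Y_{\tilde u}}=1-\frac{2}{\mu}=\left(\frac{\mu}{2}-1\right)\int_a^b f_{Y_{\tilde u}}$. Substituting this collapses the expression to $\left(\frac{\mu}{2}-1\right)\int_Y^b f_{Y_{\tilde u}}(y)\,dy$.

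To bring this into the form of \eqref{eq_CDF_Dif}, I would use the symmetry $f_{Y_{\tilde u}}(y)=f_{Y_{\tilde u}}(b-y)$, which is evident from \eqref{PDF_interference_power_each_user}. The substitution $y\mapsto b-y$ then gives $\int_Y^b f_{Y_{\tilde u}}(y)\,dy=\int_0^{b-Y} f_{Y_{\tilde u}}(y)\,dy$, whose upper limit $b-Y$ is the one appearing in \eqref{eq_CDF_Dif}.

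This reconciliation is the step I expect to be the main obstacle. It does not produce the lower limit $a$ that \eqref{eq_CDF_Dif} states. For large $\mu$ we have $b-Y<b-a=\frac{\sin^2(\pi/\mu)\zeta_u}{V^2}<a$, so the stated integral $\int_a^{b-Y}$ would run backwards and be non-positive. That contradicts the claimed non-negativity, which suggests the stated limits are a typo. I would therefore present the result as $\left(\frac{\mu}{2}-1\right)\int_Y^b f_{Y_{\tilde u}}=\left(\frac{\mu}{2}-1\right)\int_0^{b-Y} f_{Y_{\tilde u}}$ and state explicitly how it relates to the printed expression.

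Non-negativity then follows immediately in both cases. The integrand $f_{Y_{\tilde u}}$ is positive, the integration intervals are properly oriented, and $\frac{\mu}{2}-1\ge0$ because $\mu\ge2$ is an even integer.
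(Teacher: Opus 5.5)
Your argument is the same as the paper's: split $F_{Y_{\tilde{u}}}$ at $a$, use $f_{\alpha_{\mathrm{I}}}=\frac{\mu}{2}f_{Y_{\tilde{u}}}$ on $(a,b)$, and invoke the normalization identity $\int_0^{a} f_{Y_{\tilde{u}}}=\left(\frac{\mu}{2}-1\right)\int_a^{b} f_{Y_{\tilde{u}}}$ (in your notation) to reduce the difference to $\left(\frac{\mu}{2}-1\right)\left[\int_a^{b} f_{Y_{\tilde{u}}}-\int_a^{Y} f_{Y_{\tilde{u}}}\right]$. You are also right that the paper's unstated ``further simplification'' cannot produce the printed lower limit $a$ (for even $\mu\geq 4$ that integral runs backwards and is negative), so the correct second case is $\left(\frac{\mu}{2}-1\right)\int_Y^{b} f_{Y_{\tilde{u}}}(y)\,dy=\left(\frac{\mu}{2}-1\right)\int_0^{b-Y} f_{Y_{\tilde{u}}}(y)\,dy\geq 0$, exactly as you propose.
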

\begin{proof}
    See Appendix \ref{Proof_Theorem_CDF_Dif}.
\end{proof}

With the CDF difference provided in Theorem \ref{Theorem_CDF_Dif}, we have the following corollary on the stochastic dominance between the received CUMA signal power, and interference power from user $\tilde{u}$.
\begin{corollarybox}\label{coro:FSD}
In the considered system, the received CUMA signal power, $\alpha_{\mathrm{I}}$, first-order stochastically dominates the interference power from user $\tilde{u}$, $Y_{\tilde{u}}$, i.e., $\alpha_{\mathrm{I}} \succ_{\text{FSD}} Y_{\tilde{u}}$.
\end{corollarybox}
\begin{proof}
    This is the result of Definition \ref{FSD_Definition} and Theorem \ref{Theorem_CDF_Dif}.
\end{proof}
\begin{corollarybox}\label{coro_PDF_Ratio}
The PDF ratio between the received CUMA signal power, and the interference power from user $\tilde{u}$, is given by
\begin{equation}
\begin{aligned}
\frac{f_{\alpha_{\mathrm{I}}}(\alpha)}{f_{Y_{\tilde{u}}}(y)} =
& \begin{cases}
\displaystyle
0,&  0 < Y \leq \frac{\cos^2\left(\frac{\pi}{\mu}\right)\zeta_u}{V^2}, \\[10pt]
\displaystyle
\frac{\mu}{2} ,&  \frac{\cos^2\left(\frac{\pi}{\mu}\right)\zeta_u}{V^2} < Y < \frac{\zeta_u}{V^2}.
\end{cases}
\end{aligned}
\end{equation}
\end{corollarybox}
\begin{proof}
    This is the result of Lemma \ref{lemma_rewrite_PDF}.
\end{proof}

The behavior of the received CUMA signal power, $\alpha_{\mathrm{I}}$, relative to the interference power from user $\tilde{u}$, $Y_{\tilde{u}}$, can be characterized in the following two respects. First, the result in Corollary \ref{coro:FSD} implies that $\alpha_{\mathrm{I}}$ is more likely to attain larger values than $Y_{\tilde{u}}$ in the sense of first-order stochastic dominance, indicating that $\alpha_{\mathrm{I}}$ statistically outperforms $Y_{\tilde{u}}$. Second, given $\mu = 2$, the performance of $\alpha_{\mathrm{I}}$ is the same as $Y_{\tilde{u}}$, as indicated in both Corollary \ref{coro:FSD} and \ref{coro_PDF_Ratio}. However, with the increase on $\mu$, $\alpha_{\mathrm{I}}$ allocates increasingly more probability mass to the upper tail of the distribution compared to $Y_{\tilde{u}}$. This density amplification strengthens the first-order stochastic dominance relationship and highlights the enhanced tendency of $\alpha_{\mathrm{I}}$ toward larger realizations. In other words, with the increase on $\mu$, the performance gap between the received CUMA signal power $\alpha_{\mathrm{I}}$ and the interference power from user $\tilde{u}$, $Y_{\tilde{u}}$, increases. This can also be understood from the perspective that the satellite CUMA network distinguishes between the desired signal and interference through absolute phase differences, as discussed in Section~\ref{Discussion_beamfroming_gain}. The denser port deployment (larger $\mu$) enables higher phase resolution, thereby enhancing signal separability and improving overall system performance\footnote{This result differs from prior studies that were based on spatially correlated FAS structures. This paper considers the LoS propagation scenario and therefore does not rely on spatial correlation.}. This suggests that the compact fluid antenna, i.e, large $\mu$, is the favorable choice in satellite communications scenario.

\subsection{Other Port Activation Strategies}
The previous discussion focused on using set $\mathcal{K}_1$ to improve the received SINR, which includes all the positive ports of the in-phase component. Initially, CUMA was proposed to activate two sets of ports, one containing the positive in-phase channels and the other containing the positive quadrature channels, thereby enabling signal diversity through the use of two distinct signals, as detailed in \cite{CUMA}. However, for the considered system, in a LoS scenario, with a sufficiently large number of ports $K$, the addition of another set of ports does not yield any improvement in SINR. To demonstrate this, we consider another set, $\mathcal{K}_2$, which contains all the negative ports of the in-phase component\footnote{The same reasoning applies when collecting all the positive (or negative) components of the quadrature signals at each port. Due to space limit, detailed discussion is omitted here. }. The rationale behind selecting $\mathcal{K}_2$ lies in ensuring that each port is assigned exclusively to either $\mathcal{K}_1$ or $\mathcal{K}_2$. This guarantees that no port is included in both sets simultaneously, i.e., $\mathcal{K}_{1} \cap \mathcal{K}_{2} = \varnothing$.
 
\begin{theorembox}\label{Theorm:Set_K_2}
    With a sufficiently large $K$, the received CUMA in-phase signal that collected through $\mathcal{K}_{2}$, is nearly identical to that collected on $\mathcal{K}_{1}$, i.e., $\alpha_{\mathrm{I},\mathcal{K}_{2}}\approx \alpha_{\mathrm{I},\mathcal{K}_{1}}$. The same argument holds for the interference, i.e., $Y_{\tilde{u},\mathcal{K}_{2}} \approx Y_{\tilde{u},\mathcal{K}_{1}}$.
\end{theorembox}
\begin{proof}
    See Appendix \ref{Appendix:Theorm_Set_K_2}.
\end{proof}

The result in Theorem \ref{Theorm:Set_K_2} indicates that ${\mathrm{SINR}}_{\mathrm{I}}^{\mathcal{K}1} \approx {\mathrm{SINR}}_{\mathrm{I}}^{\mathcal{K}_2}$ when the number of ports $K$ is large. Consequently, the combining of these two branches does not provide additional gain to the received signal, unless additional signal processing techniques can be employed with perfect knowledge of the SINR on each branch, see \cite[(40)]{CUMA}. This suggests that, for a satellite CUMA network, only a single RF chain is required for each fluid antenna.

\section{Numerical Results and Discussion}\label{sec:results}
In this section, we provide the simulation results to evaluate the performance of satellite CUMA network. The parameters used in the simulations are given in Table \ref{Table_Simulation}, unless otherwise specified. 

\begin{table}[ht]
\caption{Simulation Parameters \cite{3gpp2019study}}\label{Table_Simulation}
\centering
\begin{tabular}{p{0.4\linewidth}>{\centering\arraybackslash}p{0.4\linewidth}} 
\toprule
\textbf{Parameter} & \textbf{Value} \\ \midrule
Orbit & LEO \\
Frequency band & 30 GHz \\
User link bandwidth & $B$ = 10 MHz \\
User transmit power & $P$ = 1 W \\
Antenna Gain & $G$ = 40 dBi \\
Transmission distance & $r_{u}=r_{\tilde{u}}$ = 1200 km \\
Symbol  power & $\sigma_{s}^2$ = 1 \\
Boltzmann constant  & $c_{\mathrm{B}} = 1.381\times10^{-23}$ $\mathrm{J /K}$\\
Clear sky temperature & $T = 207^\circ$ $\mathrm{K}$ \\
\bottomrule
\end{tabular}
\end{table}

\begin{figure}[!htbp]
\centering
\includegraphics[width=0.8\linewidth]{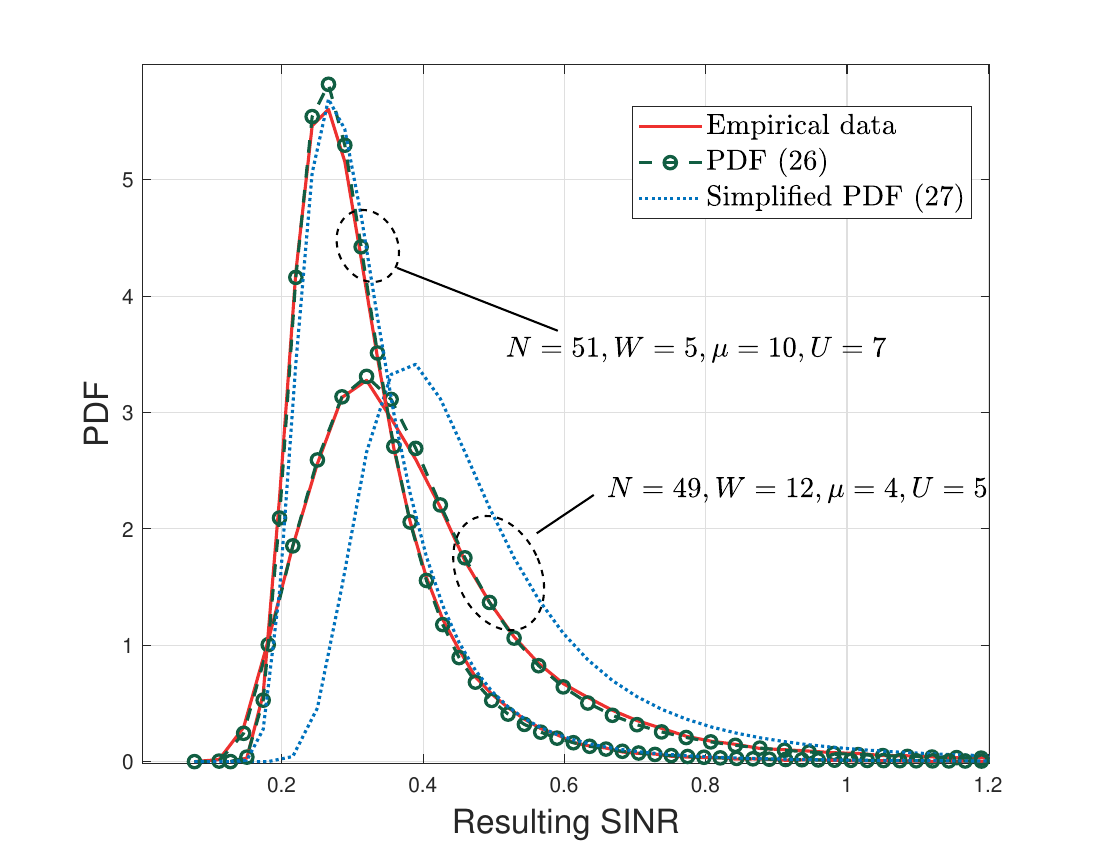}
\caption{Empirical and analytical PDF for satellite CUMA.}\label{fig:Monte_SINR}
\end{figure} 

Fig.~\ref{fig:Monte_SINR} provides the PDF of the received CUMA SINR obtained from Monte-Carlo simulations, with the analytical expression given in \eqref{Eq:PDF_Z} and the simplified expression presented in \eqref{Eq:PDF_Z_larger_mu}, evaluated under various configurations of the system parameters. Firstly, the results shows that given $\mu$ as an even integer ($\mu = 10$ and $\mu = 4$), the provided PDF expression \eqref{Eq:PDF_Z} align closely with the Monte-Carlo result. The simplified PDF only align closely with the Monte-Carlo results when $\mu = 10$ but not $\mu = 4$, demonstrated that \eqref{Eq:PDF_Z_larger_mu} is only valid for the compact fluid antenna, i.e, sufficient large $\mu$.

The results in Fig.~\ref{fig:Monte_SINR} have confirmed the accuracy of the provided PDF expression \eqref{Eq:PDF_Z_larger_mu}, under the condition that $\mu$ is an even integer. Now we turn to the results in Fig.~\ref{fig:OP_mu} that evaluate the performance of CUMA, given $\mu$ as an arbitrary integer, in terms of outage probability and average SINR. The average SINR of both MRC and zero-forcing (ZF) are provided as benchmark for comparison.

It can be observed both expression \eqref{Eq:OP_Single_Integral} and \eqref{eq:Mean_SINR} align with the Monte-Carlo results, at those point $\mu$ is an even integer. However, for odd value of $\mu$, each of these expressions tends to underestimate the performance of CUMA. Nevertheless, as the density $\mu$ increases, the discrepancy diminishes and eventually vanishes. This observation suggests that the compact expressions for CUMA signal and interference power, given by \eqref{eq:Approximated_Expression_CUMA_Signal_Power} and \eqref{eq:Interference_Expression}, as well as the PDF expression in \eqref{Eq:PDF_Z}, remain valid at the high density level, regardless of whether $\mu$ is even or not. These expressions can thus serve as a foundation for performance analysis, irrespective of whether $\mu$ is even or odd. Moreover, it is evident that as $\mu$ increases, the average SINR improves, leading to a corresponding reduction in the outage probability for the typical user. Notably, compared to CUMA and MRC, ZF results in the lowest average SINR. This is the result of low SNR conditions typical of satellite communications scenarios, along with the noise enhancement drawback inherent in ZF. While MRC achieves comparable performance to CUMA, it requires 3 antennas per user—resulting in a total of 15 antennas. In contrast, CUMA employs only 5 fluid antennas, equal to the number of users.

\begin{figure}
\centering
\includegraphics[width=0.8\linewidth]{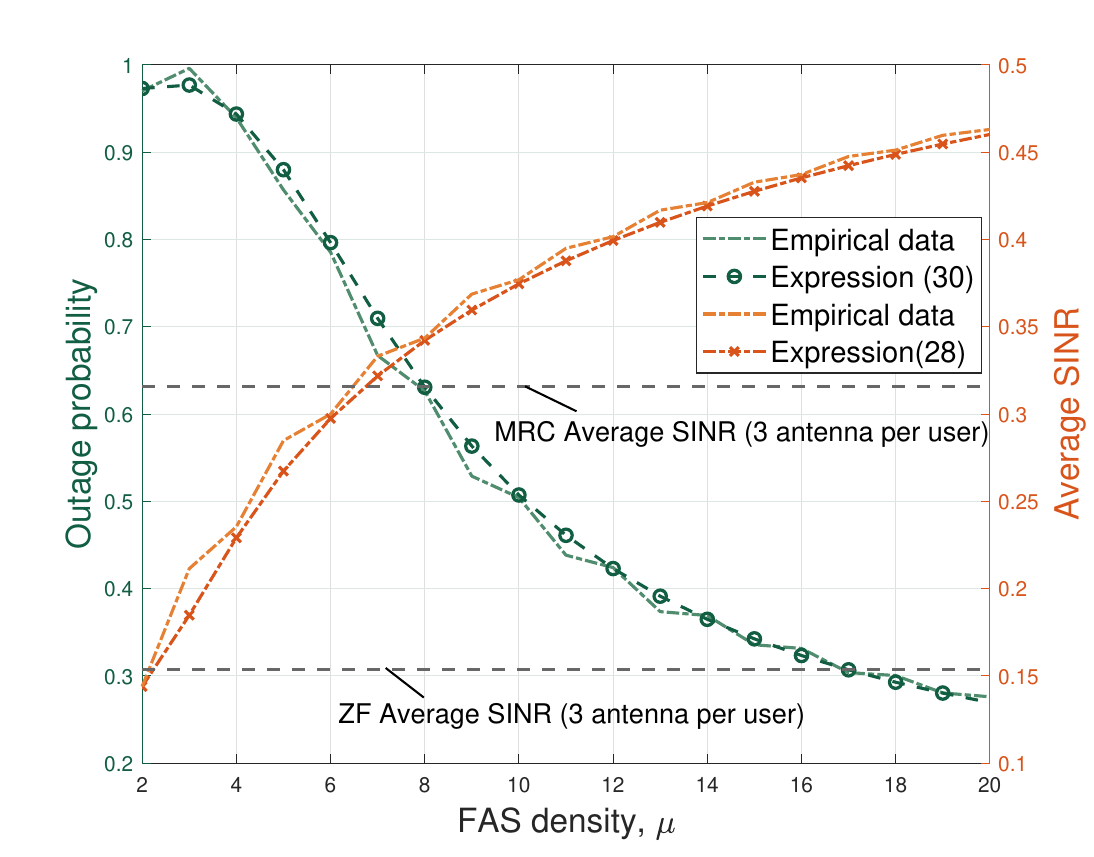}
\caption{Outage probability and average SINR against the fluid antenna port density $\mu$, given $W=2, U=5$ and $\gamma = 0.35$.} \label{fig:OP_mu}
\end{figure}

\begin{figure}
\centering
\includegraphics[width=0.8\linewidth]{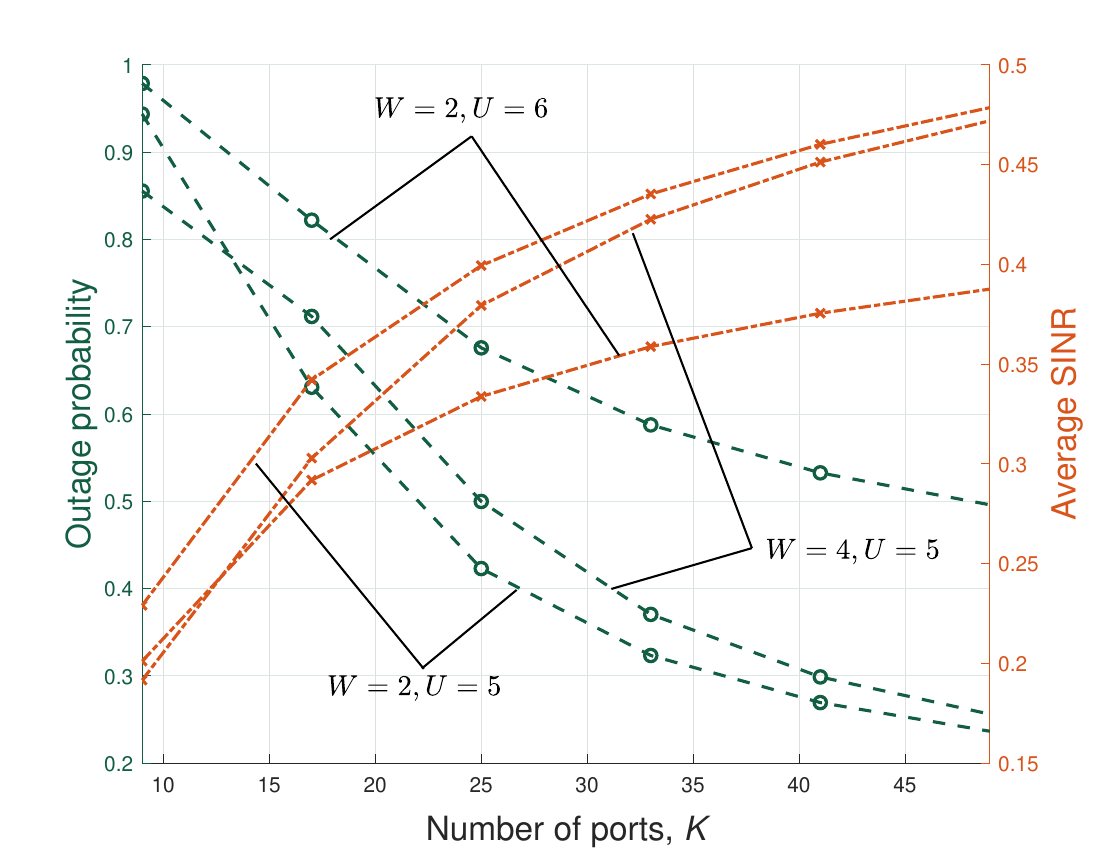}
\caption{Outage probability and average SINR against the number of ports $K$, given different $W$ and $U$ when $\gamma = 0.35$. }\label{fig:OP_N}
\end{figure}

Fig.~\ref{fig:OP_N} illustrates that as the number of ports, $K$, increases, the average SINR improves and the outage probability decreases, as increasing $K$ is equivalent to increasing the fluid antenna port density, $\mu$. Furthermore, for a fixed number of ports $K$, increasing $W$ from 2 to 4 leads to a decrease in SINR, resulting in degraded performance. This is due to given fixed $K$, increasing $W$ is equivalent to decreases $\mu$. 
In other words, it is not the larger FAS size that leads to the performance degradation, but rather the lower port density $\mu$. Additionally, as the number of users $U$, increases, the performance of the typical user decreased.

To further analyze the influence of $U$, we refer to Fig.~\ref{fig:OP_U}. It is evident that as $U$ increases, the average SINR of the typical user decreases, resulting in a higher outage probability. For the same value of $U$, the case of $\mu = 10$ outperforms $\mu = 5$, indicating that a higher fluid antenna port density helps mitigate interference. This observation is consistent with the results shown in Fig.~\ref{fig:OP_mu}.

The results in Fig.~\ref{fig:CDF_Threshold} provide the CDF of the received CUMA signal power $\alpha_{\mathrm{I}}$, and the interference power from user $\tilde{u}$, $Y_{\tilde{u}}$, under different port density $\mu$. It is evident that the CDF of $\alpha_{\mathrm{I}}$ lies entirely below to that of $Y_{\tilde{u}}$, indicating that the received CUMA signal power, $\alpha_{\mathrm{I}}$, first-order stochastically dominates the interference power $Y_{\tilde{u}}$, as discussed in Corollary \ref{coro:FSD}. Furthermore, when $\mu=6$, the increased separation between CDF curve of $\alpha_{\mathrm{I}}$ and $Y_{\tilde{u}}$, compared to the case of $\mu=4$, suggests an enhancement in the first-order stochastic dominance of
$\alpha_{\mathrm{I}}$ over $Y_{\tilde{u}}$. This highlights the effectiveness of higher FAS port density $\mu$ in mitigating interference.

\begin{figure}
\centering
\includegraphics[width=0.8\linewidth]{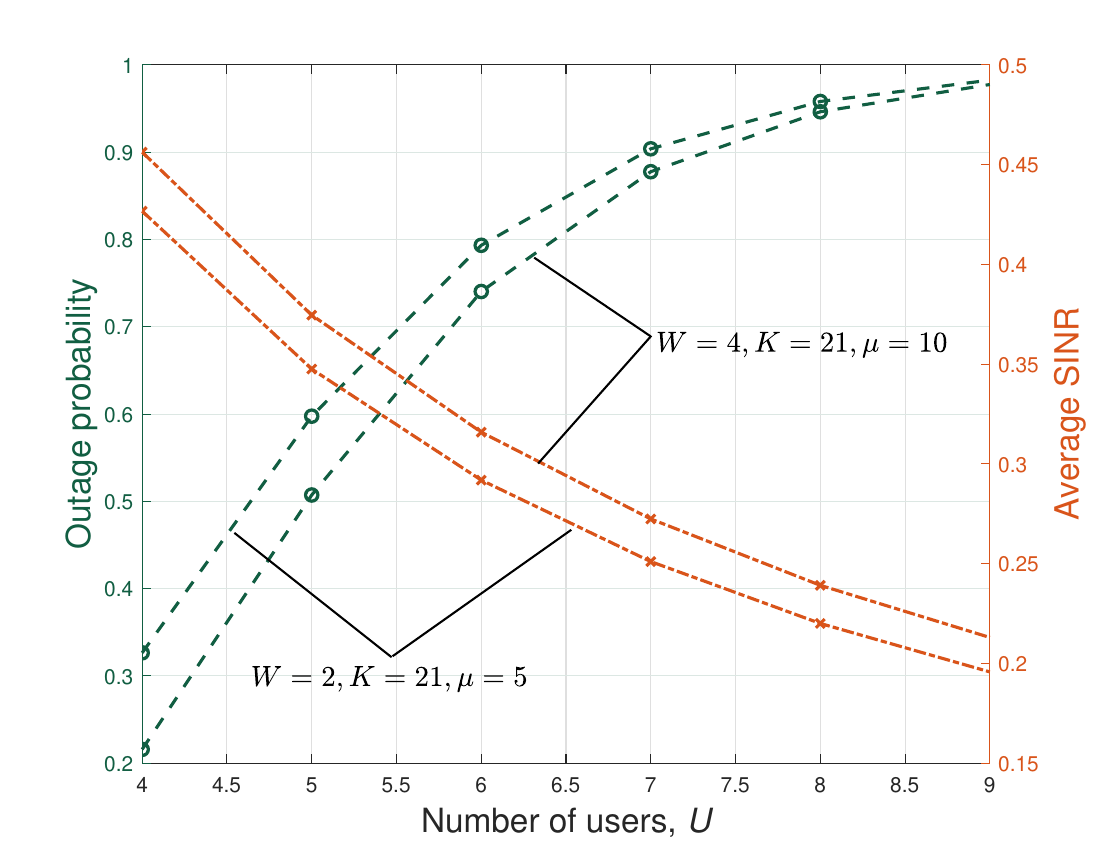}
\caption{Outage probability and average SINR against the number of users $U$, given different $\mu$ when $\gamma = 0.35$. }\label{fig:OP_U}
\end{figure}

\begin{figure}
\centering
\includegraphics[width=0.8\linewidth]{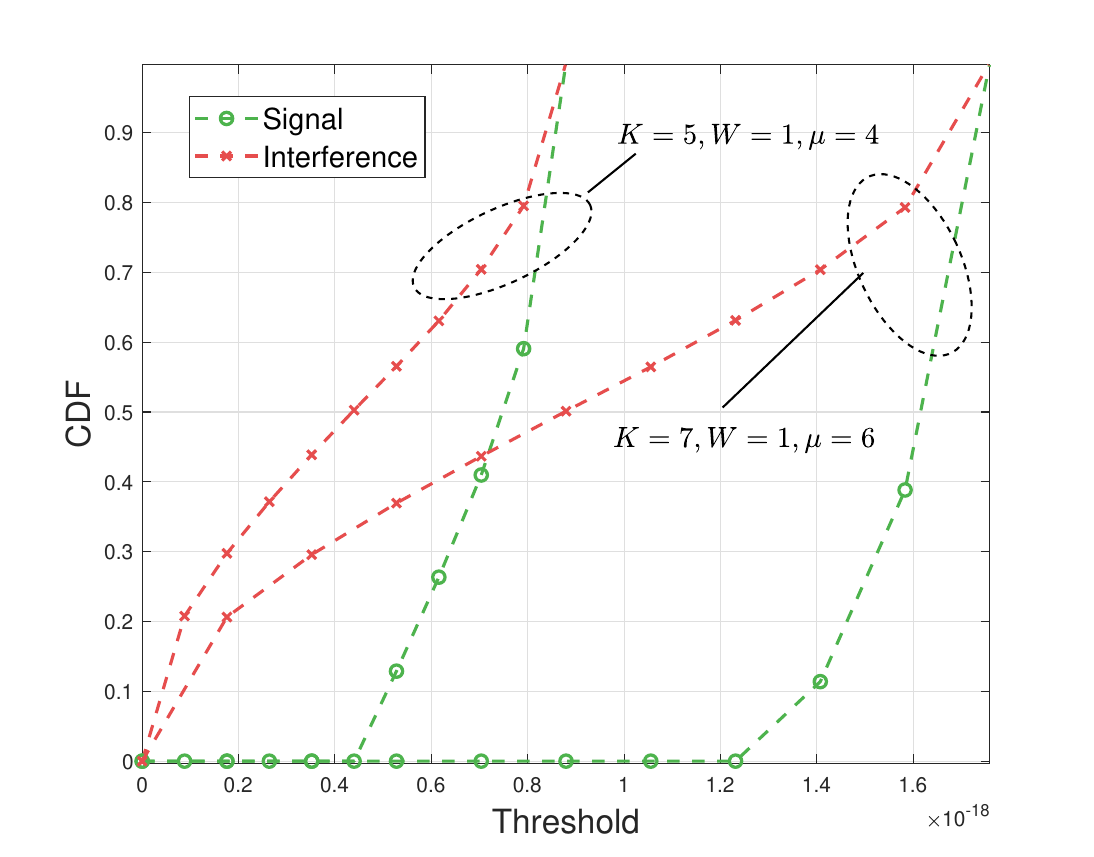}
\caption{CDF of the signal power $\alpha_{\mathrm{I}}$ and the interference power from user $\tilde{u}$, $Y_{\tilde{u}}$, against the threshold, given different $\mu$.}\label{fig:CDF_Threshold}
\end{figure}

Next, we study the performance on the ergodic rate of satellite $N$-CUMA network. As a benchmark, orthogonal CUMA ($O$-CUMA) is considered, which utilizes a single fluid antenna to service $U$ users through $U$ time slots, while the total bandwidth $B$ is allocated to each user, therefore no inter-user interference exists. As a result, $O$-CUMA achieves the same ergodic rate as a single-user CUMA system. 

\begin{figure}
\centering
\includegraphics[width=0.8\linewidth]{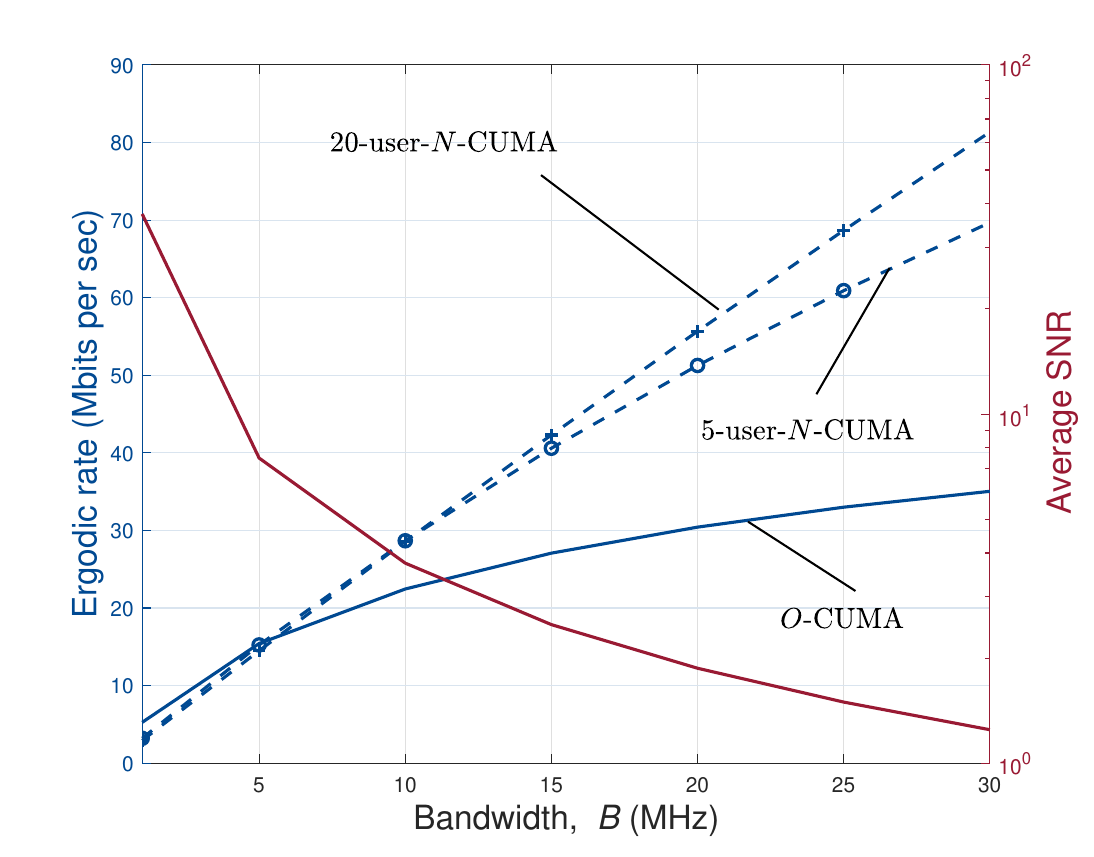}
\caption{Ergodic rate and average SNR against the bandwidth $B$, when $K=61$ and $W=3$.}\label{fig:Ce_B}
\end{figure}

The results in Fig.~\ref{fig:Ce_B} examine the impact of the total bandwidth $B$ on both ergodic rate and average SNR, given different number of users $U$. With the increase on $B$, the average SNR decreased. In contrast, the ergodic rate of three considered setting ($U=1$, $U=5$ and $U=20$) all increased. This is due to that despite the noise power is linearly increased with $B$, the larger communication bandwidth still improve the ergodic rate of CUMA system. Additionally, $O$-CUMA appears to be the optimal setting given limited $B$, but with the increase on $B$, 20-user-$N$-CUMA starts to outperform both 5-user-$N$-CUMA. It is the result that under broad bandwidth, the system tends to be noise-limited, i.e., the increase on interfering user do not decrease the SINR that much.  

\begin{figure} 
    \centering
  \subfloat[$B = 10$ MHz.\label{fig:Ce_mu_B_10}]{\includegraphics[width=0.48\linewidth,height=0.22\textheight]{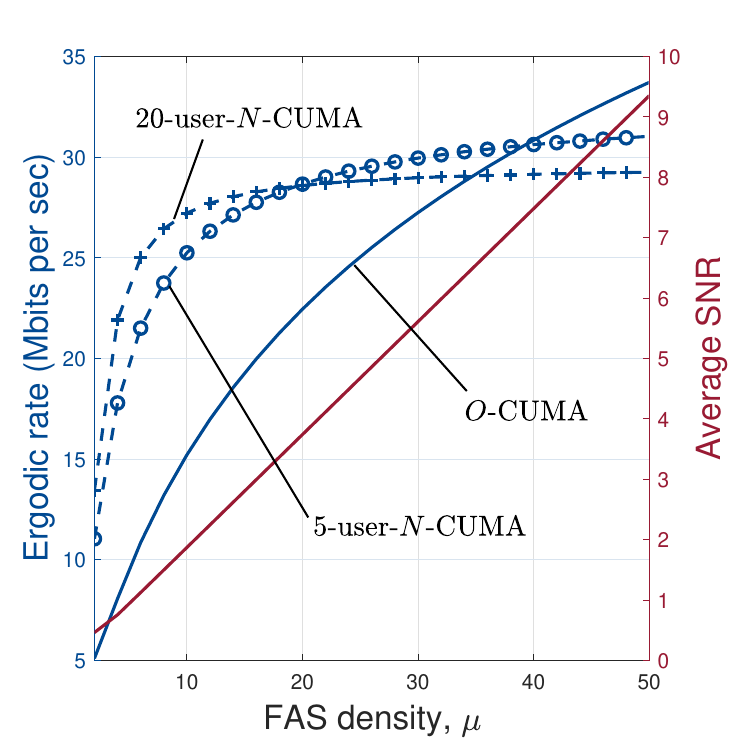}}
    \hfill
  \subfloat[$B = 20$ MHz.\label{fig:Ce_mu_B_20}]{\includegraphics[width=0.48\linewidth,height=0.22\textheight]{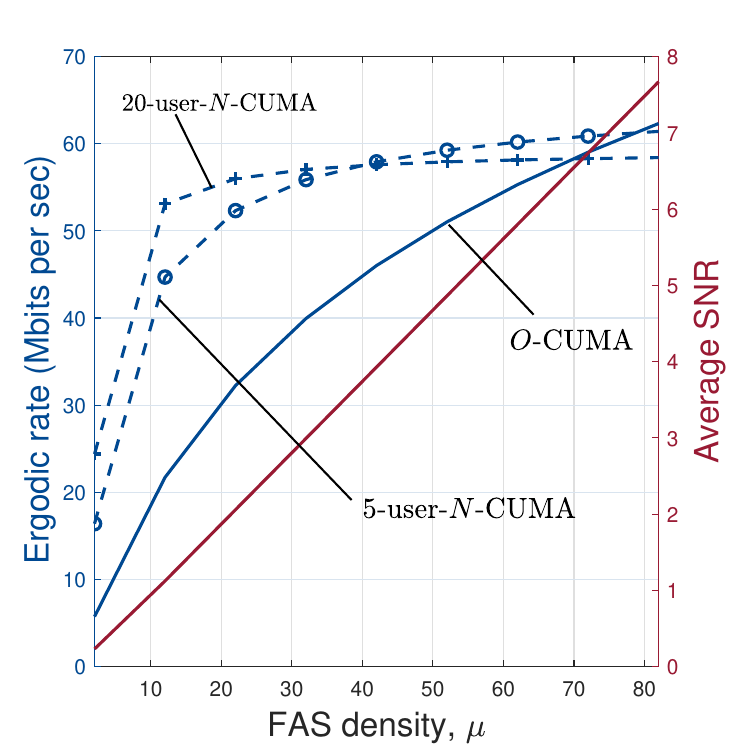}}
  \caption{Ergodic rate and average SNR against the fluid antenna port density $\mu$, given different bandwidth $B$ when $W=3$.}\label{fig:Ce_mu} 
\end{figure}

Now, the results in Fig.~\ref{fig:Ce_mu} investigate the ergodic rate and average SNR when the number of fluid antenna port density $\mu$ changes, under different bandwidth $B$ and $U$. Firstly, in both $B$ = 10 MHz and $B$ = 20 MHz, with the increase on $\mu$, the average SNR is linearly increased. This is due to that $\mu$ is another measure on the number of port $K$, while the beamforming gain is linearly increased with $K$, as discussed in Section \ref{Discussion_beamfroming_gain}. Also, at low density $\mu$, 20-user-$N$-CUMA achieves the highest ergodic rate while $O$-CUMA achieves the lowest, whereas at high $\mu$, the trend reverses. This occurs because, with high $\mu$, the system tends to become interference-limited due to the high beamforming gain applied by CUMA on both the signal and interference. This also explains why the 20-user-$N$-CUMA initially outperforms, but is eventually surpassed by the 5-user-$N$-CUMA. Moreover, it can be observed that with larger bandwidth $B$, $O$-CUMA requires higher density $\mu$ to outperform $N$-CUMA, as the higher bandwidth increases the noise power, necessitating higher $\mu$ to approach the noise-limited region. 

\begin{figure}
\centering
\includegraphics[width=0.8\linewidth]{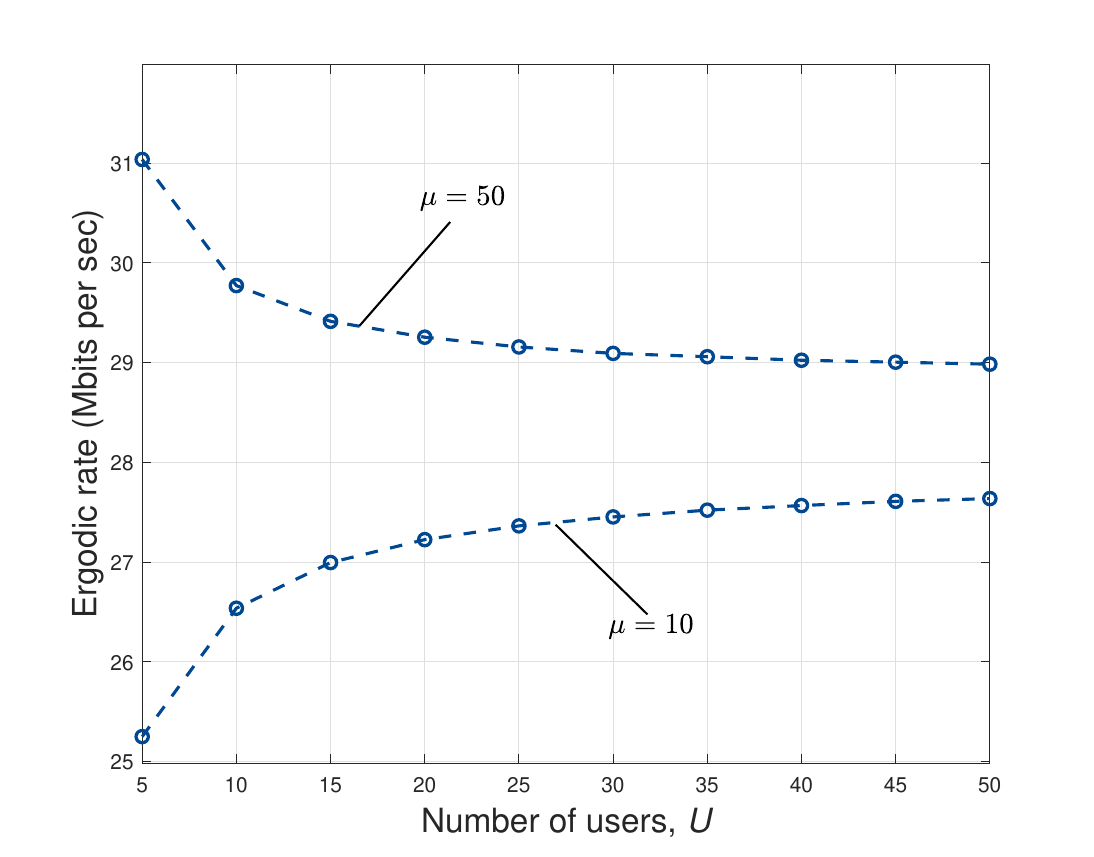}
\caption{Ergodic rate against the number of user $U$, given different port density $\mu$ when $W=3$.}\label{fig:Ce_U}
\end{figure}

The results presented in Fig.~\ref{fig:Ce_U} illustrate the ergodic rate for the satellite CUMA network under varying density $\mu$ and the number of users $U$. It is observed that for lower density ($\mu=10$), an increase in the number of users leads to a corresponding increase in the ergodic rate. Conversely, for higher density ($\mu=50$), the trend is reversed. This behavior can be attributed to the relationship between density and average SNR. In the case of lower density ($\mu=10$), the average SNR is relatively low, and the system tends to be noise-limited. As a result, increasing the number of users does not significantly degrade the SINR (hence also outage probability) of the typical user. Instead, the ergodic rate increases directly, as expressed in \eqref{Eq:Ergodic_Rate_Exactly}. In contrast, for higher density ($\mu=50$), the system becomes more susceptible to interference, and an increase in $U$ leads to a sharp decline in SINR and an increase in the outage probability of the typical user, thereby reducing the ergodic rate. Nonetheless, the higher density $\mu=50$ consistently outperforms the lower density $\mu=10$ in terms of ergodic rate, and as the number of users increases, the performance gap between the two densities begins to converge.

\begin{figure}
\centering
\includegraphics[width=0.8\linewidth]{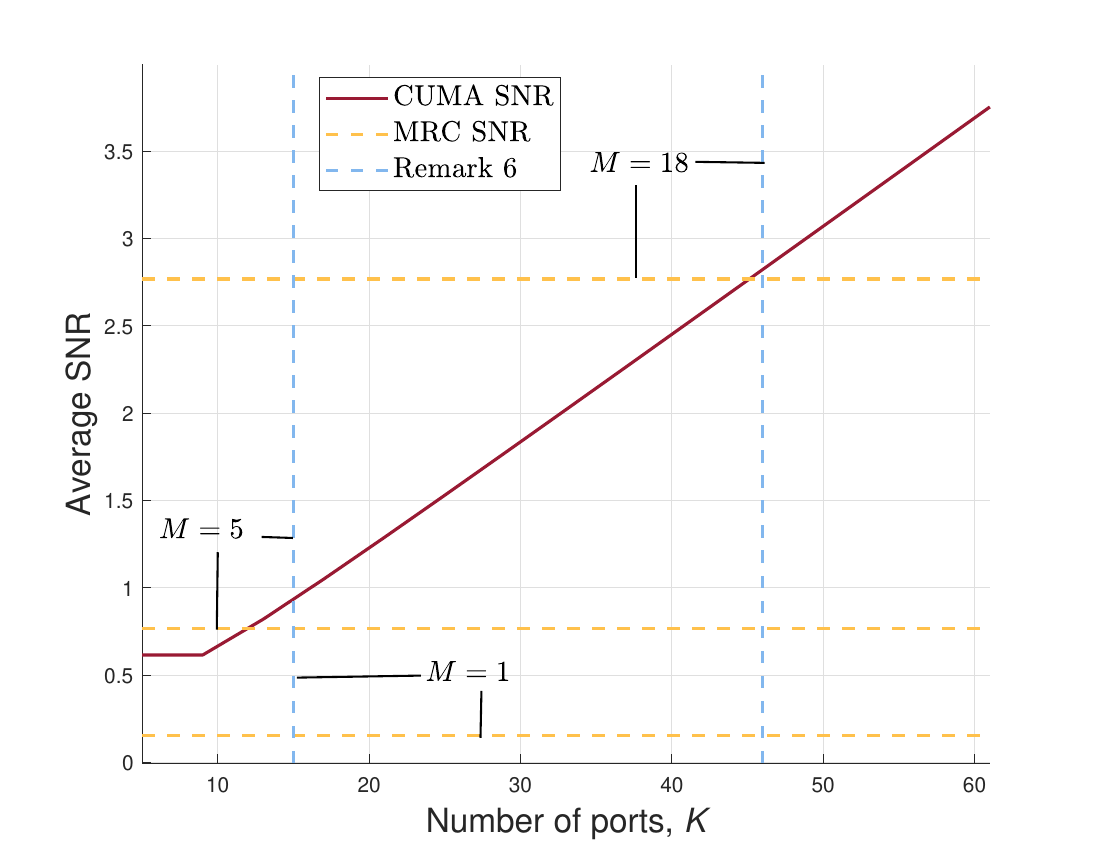}
\caption{Average SNR for different approaches against the number of ports $K$, given $W = 3$ and $\epsilon=7$. }\label{fig:Average_SNR_N}
\end{figure} 

\begin{figure}
\centering
\includegraphics[width=0.8\linewidth]{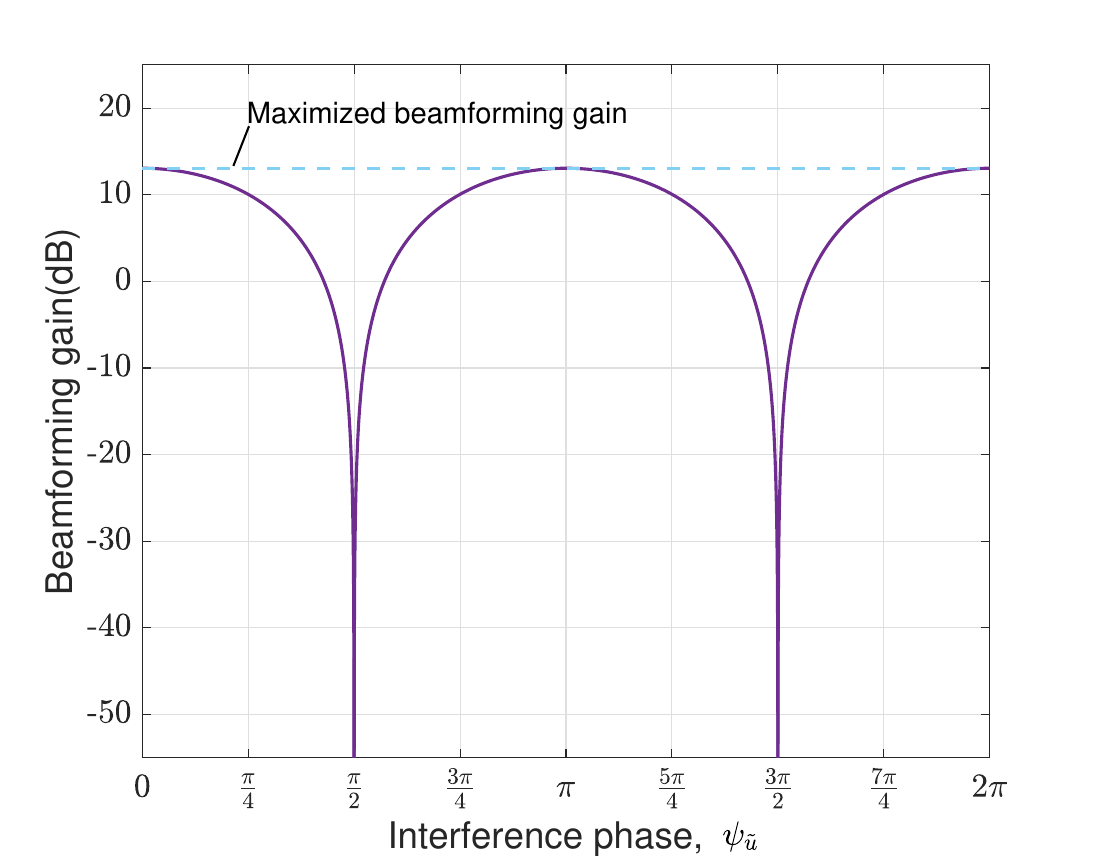}
\caption{The beamforming gain for interference against the interference phase $\psi_{\tilde{u}}$, given $\psi_{u} = \pi$, $K=51$, $W = 5$.}\label{fig:Beamforming_Gain}
\end{figure}

Now, the results in Fig.~\ref{fig:Average_SNR_N} illustrate that under a noise-limited system, the impact of the number of ports $K$ on the average SNR of CUMA. The results of using MRC are also provided as a benchmark here. The average SNR of CUMA and MRC can be obtained by neglecting the interference term in \eqref{Eq:SINR_exact} and \eqref{Eq:MRC_SINR}, respectively. Note the average SNR 
of CUMA is linearly increased with respect to $K$ under the compact fluid antenna setting, as discussed in Section \ref{Discussion_beamfroming_gain}. With $M=18$, the results in Remark \ref{Remark:CUMA_Outperform_MRC_Noise} exactly specify the number of ports $K$ required to outperform MRC. As for $M=3$, Remark \ref{Remark:CUMA_Outperform_MRC_Noise} suggests a higher number of ports, as it derived under the assumption of the compact fluid antenna. $M=1$ corresponds to a single-input single-output (SISO) scenario, which is observed to achieve the worst performance.

Finally, Fig.~\ref{fig:Beamforming_Gain} illustrates the impact of the initial phase on the beamforming gain for interference. Two angular intervals are observed: one around the initial phase of the signal, $\pi$, and the other around $0$ (which also corresponds to $2\pi$). This behavior results from the periodicity of trigonometric functions, where signals separated by the phase of $\pi$ exhibit the same power. Consequently, any interfering signal with an initial phase different from these two intervals is attenuated.

\section{Conclusion}\label{sec:conclude}

This paper explored a satellite CUMA network in which the satellite is equipped with CUMA to mitigate inter-user interference during uplink transmission. Our emphasis was on scenarios involving densely distributed users, where the azimuth and elevation angles of the target and interfering users are indistinguishable, thereby preventing MRC from differentiating between them. We derived the outage probability expression for the considered system and provided an approximate closed-form expression of it. Numerical results have demonstrated that, while $O$-CUMA outperforms $N$-CUMA with narrow bandwidth, $N$-CUMA exhibits superior performance as the bandwidth increases. This indicates that $N$-CUMA is the preferred method under broadband conditions. These findings suggest that CUMA presents a viable solution for mitigating regional interference in satellite communications scenarios, while also reducing the number of RF chains, thereby lowering costs. Further research could investigate the performance of satellite CUMA systems without assuming a densely distributed user scenario, wherein the angles of arrival may be arbitrary.

\appendix

\subsection{Proof of Lemma \ref{Lemma_Signal_Expression}}\label{appendix:Proof_Lemma_Signal_Expression}
Note $W\lambda$ denotes the length of the fluid antenna, where $\lambda$ is the propagation wavelength. Therefore, the propagation distance of the radio wave emitted by user $u$ along the fluid antenna is $W\lambda$. According to \eqref{Path_Loss} and \eqref{CUMA_signal}, the received CUMA in-phase signal magnitude $\sqrt{\alpha_{\rm I}}$ can be expressed as
\begin{align}\label{Signal_Approximation_Step1}
    \sqrt{\alpha_{\rm I}} &= \zeta_{u}^{\frac{1}{2}}\sum_{k\in {\mathcal{K}_{1}}}{\mathrm{real}}\left (e^{j\psi_{u}}e^{j2\pi\frac{k-1}{\mu}}\right) \nonumber \\
    &= \zeta_{u}^{\frac{1}{2}}\sum_{k\in {\mathcal{K}_{1}}} \cos{\left(\psi_{u}+2\pi\frac{(k-1)}{\mu}(k-1)\right)},
\end{align}
which can be seen as an aggregated discrete sampling over a length of $W\lambda$. Due to the periodicity of the cos function, \eqref{Signal_Approximation_Step1} is equivalent to sampling the cosine function over $\lambda$, scaled by $W$, as shown in \eqref{Signal_Expression}. The lower and upper bounds can be found based on the periodicity of cos function.  
\subsection{Proof of Lemma \ref{Lemma_Upper_Limit_Expression}}\label{appendix:Proof_Lemma_Upper_Limit_Expression}
The difference between $k_{\mathrm{up}}$ and $k_{\mathrm{low}}$ can be directly calculated using the results given in Lemma \ref{Lemma_Signal_Expression}. For notation simplify, let $A = \frac{5}{4}\mu-\frac{\psi_{u}}{2\pi}\mu$ and $B = \frac{3}{4}\mu-\frac{\psi_{u}}{2\pi}\mu = A- \frac{\mu}{2}$. Note both $A$ and $B$ can be written as a combination of their integer and fractional parts, given by
\begin{align}
\begin{cases}
    A &= \text{n}\left(A\right)+\text{f}\left(A\right), \\
    B &= \text{n}\left(A\right)+\text{f}\left(A\right)-\text{n}\left(\frac{\mu}{2}\right)-\text{f}\left(\frac{\mu}{2}\right),
\end{cases}
\end{align}
where $\text{n}\left(\cdot\right)$ and $\text{f}\left(\cdot\right)$ represent the operations of extracting the integer part and the fractional part, respectively.

Consequently, the difference between $k_{\mathrm{up}}$ and $k_{\mathrm{low}}$ is given by
\begin{align}
    k_{\mathrm{up}}-k_{\mathrm{low}} &= \left\lfloor A \right\rfloor - \left\lceil B \right\rceil \nonumber \\ 
    &= \text{n}\left(\frac{\mu}{2}\right)-\left\lceil \text{f}\left(A\right)-\text{f}\left(\frac{\mu}{2}\right) \right\rceil.
\end{align}
Given $\mu$ as an even integer, such that  
$\text{f}\left(\frac{\mu}{2}\right) = 0$ and $\text{n}\left(\frac{\mu}{2}\right) = \frac{\mu}{2}$. Since $\psi_{u} \in(0,2\pi)$, it follows that $A>0$, and thus $\text{f}\left(A\right) > 0$, leading to $\left\lceil \text{f}\left(A\right) \right\rceil = 1$. Finally, we obtain \eqref{Expression_k_up}.

\subsection{Proof of Theorem \ref{Theorem:Simplified_Signal_Power}}\label{appendix:Proof_Theorem_Simplified_Signal}
The received CUMA signal can be expressed as \eqref{Appendix_B_1}, as shown at the top of next page, where $(a)$ is the result of splitting the summation, $(b)$ is obtained by taking the real part of the exponential sum \cite[(3.1.10)]{Geometric_Series}, and $(c)$  involves the substitution of \eqref{Expression_k_up}. Through further simplification using trigonometric identities that $\sin{\left(x+\frac{\pi}{2}\right)} =\cos{\left(x\right)}$, $\cos{\left(x+\frac{\pi}{2}\right)} = -\sin{\left(x\right)}$, $\sin{\left(x+y\right)} =\cos{\left(x\right)}\sin{\left(y\right)}+\sin{\left(x\right)\cos{\left(y\right)}}$, and by substituting \eqref{k_Low_up_Expression} in, we reached $(d)$. 
\begin{figure*}[]
\centering
\begin{align}\label{Appendix_B_1}
\sqrt{\alpha _{\mathrm{I}}}
 &\overset{(a)}{=} \zeta_{u}^{\frac{1}{2}}W\left[\sum_{ k=2}^{k_{\mathrm{up}}}\cos{\left(\psi_{u}+ \frac{2\pi}{\mu}(k-1)\right)}-\sum_{ k=2}^{k_{\mathrm{low}}-1}\cos{\left(\psi_{u}+\frac{2\pi}{\mu}(k-1)\right)} \right] \nonumber \\
  &\overset{(b)}{=}\frac{\zeta_{u}^{\frac{1}{2}}W}{\sin{\left(\frac{\pi}{\mu}\right)}}\left[\sin{\left(\frac{\pi}{\mu}k_{\mathrm{up}}\right)}\cos{\left(\psi_{u}+\frac{\pi}{\mu}(k_{\mathrm{up}}-1)\right)}-\sin{\left(\frac{\pi}{\mu}(k_{\mathrm{low}}-1)\right)}\cos{\left(\psi_{u}+\frac{\pi}{\mu}(k_{\mathrm{low}}-2)\right)}\right] \nonumber \\
  &\overset{(c)}{=}\frac{\zeta_{u}^{\frac{1}{2}}W}{\sin{\left(\frac{\pi}{\mu}\right)}}\bigg[\sin{\left(\frac{\pi}{\mu}k_{\mathrm{low}}+\frac{\pi}{2}-\frac{\pi}{\mu}\right)}\cos{\left(\psi_{u}+\frac{\pi}{\mu}k_{\mathrm{low}}+\frac{\pi}{2}-\frac{2\pi}{\mu}\right)} -\sin{\left(\frac{\pi}{\mu}k_{\mathrm{low}}-\frac{\pi}{\mu}\right)}\cos{\left(\psi_{u}+\frac{\pi}{\mu}k_{\mathrm{low}}-\frac{2\pi}{\mu}\right)}\bigg]\nonumber \\
  & \overset{(d)}{=}-\frac{\zeta_{u}^{\frac{1}{2}}W}{\sin{\left(\frac{\pi}{\mu}\right)}}\sin{\left(\psi_{u}-\frac{\pi}{\mu}+\frac{2\pi}{\mu}\left\lceil \left(\frac{3}{4}-\frac{\psi_{u}}{2\pi}\right)\mu \right\rceil\right)}.
\end{align}
\hrulefill
\end{figure*}

Finally, set $t = \frac{3}{4}-\frac{\psi_{0}}{2\pi}$ and $V = \frac{\sin{\left(\frac{\pi}{\mu}\right)}}{W}$, the received CUMA signal magnitude $\sqrt{\alpha _{\mathrm{I}}}$ is given by 
\begin{align}
    \sqrt{\alpha _{\mathrm{I}}}=\frac{\zeta_{u}^{\frac{1}{2}}\cos{\left(-2\pi t-\frac{\pi}{\mu}+\frac{2\pi}{\mu}\left\lceil t\mu \right\rceil\right)}}{V}.
\end{align}
Subsequently, the received CUMA signal power can be expressed as \eqref{eq:Approximated_Expression_CUMA_Signal_Power}.

\subsection{Proof of Theorem \ref{Theorem_Distribution_Signal}}\label{Appendix:Proof_Theorem_Signal_Distribution}
To begin with, we set $x = -2\pi t-\frac{\pi}{\mu}+\frac{2\pi}{\mu}\left\lceil t\mu\right\rceil$ in this subsection to simplify the notation in the derivation process. We also find the following lemma useful in derivation.
\begin{lemma}\label{Lemma:continuous_RV_x}
    The random variable $x$ follows a continuous uniform distribution as $\mathcal{U}\left(-\frac{\pi}{\mu},\frac{\pi}{\mu}\right)$.
\end{lemma}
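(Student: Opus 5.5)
Write $x = \mu\left(\lceil t\mu\rceil - t\mu\right)\cdot\frac{2\pi}{\mu^2}\cdot\frac{\mu}{1}$ more transparently. Since
\begin{align*}
x = \frac{2\pi}{\mu}\left(\lceil t\mu\rceil - t\mu\right) - \frac{\pi}{\mu},
\end{align*}
the whole question reduces to the distribution of the ceiling residual $R \triangleq \lceil t\mu\rceil - t\mu \in [0,1)$. The plan is to show that $R \sim \mathcal{U}(0,1)$. The statement then follows by the affine map $x = \frac{2\pi}{\mu}R - \frac{\pi}{\mu}$, which sends $[0,1)$ onto $\left[-\frac{\pi}{\mu},\frac{\pi}{\mu}\right)$ with constant Jacobian $\frac{2\pi}{\mu}$. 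Endpoints have measure zero, so the half-open versus open interval distinction is irrelevant.

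\textbf{Distribution of $s = t\mu$.} By Theorem \ref{Theorem:Simplified_Signal_Power}, $t\sim\mathcal{U}\left(-\frac14,\frac34\right)$, since $\psi_u\sim\mathcal{U}(0,2\pi)$. Hence $s = t\mu$ is uniform on $\left(-\frac{\mu}{4},\frac{3\mu}{4}\right)$, an interval of length exactly $\mu$. This is an integer because $\mu$ is assumed an even integer.

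\textbf{Distribution of $R$.} For $r\in[0,1)$, compute $\mathbb{P}(R\le r)$ by partitioning the range of $s$ into unit cells $(n-1,n]$ with $n$ an integer. On each cell $\lceil s\rceil = n$, so $R = n - s$, and within a cell $R\le r$ exactly when $s\in[n-r,n]$, a set of length $r$.

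The main point to check is that the interval $\left(-\frac{\mu}{4},\frac{3\mu}{4}\right)$ is covered by whole cells with no partial cell at the ends. When $\mu$ is even but $\mu/4$ is not an integer (e.g.\ $\mu=6$), the endpoints $-\frac{\mu}{4}$ and $\frac{3\mu}{4}$ are half-integers, so the cells do not align. To handle this, use periodicity instead of alignment. An interval of integer length $\mu$ is, modulo 1, a union of exactly $\mu$ full periods. Concretely, the map $s\mapsto \lceil s\rceil - s$ is 1-periodic, so integrating the indicator $\mathbf{1}\{\lceil s\rceil - s\le r\}$ over any interval of integer length $\mu$ gives $\mu r$ regardless of where the interval starts.

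Therefore
\begin{align*}
\mathbb{P}(R\le r) = \frac{\mu r}{\mu} = r,
\end{align*}
so $R\sim\mathcal{U}(0,1)$.

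Substituting $R$ into the affine relation above yields $x\sim\mathcal{U}\left(-\frac{\pi}{\mu},\frac{\pi}{\mu}\right)$, which is Lemma \ref{Lemma:continuous_RV_x}.

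As a consistency check, this is what makes the support in Theorem \ref{Theorem_Distribution_Signal} come out as $\left[\cos^2\left(\frac{\pi}{\mu}\right)\zeta_u/V^2,\ \zeta_u/V^2\right]$: the values of $x$ range over $\left[-\frac{\pi}{\mu},\frac{\pi}{\mu}\right)$, so $\cos^2 x$ ranges from $\cos^2\left(\frac{\pi}{\mu}\right)$ up to $1$.
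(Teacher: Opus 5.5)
Your proof is correct. It rests on the same observation as the paper's: wherever $\lceil t\mu\rceil$ is constant, $x=\frac{2\pi}{\mu}\left(\lceil t\mu\rceil-t\mu\right)-\frac{\pi}{\mu}$ is an affine function of the uniform variable $t$. Where you differ is in how the cells are combined.

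The paper conditions on each cell $t\in\left(\frac{n}{\mu},\frac{n+1}{\mu}\right]$. It asserts that $x$ is $\mathcal{U}\left(-\frac{\pi}{\mu},\frac{\pi}{\mu}\right)$ on every such cell and concludes by invariance. This is exact only when the support $\left(-\frac{\mu}{4},\frac{3\mu}{4}\right)$ of $t\mu$ is a union of whole unit cells, i.e.\ when $\frac{\mu}{4}\in\mathbb{Z}$.

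For $\mu=6$, for instance, the end cells $(-1.5,-1]$ and $(4,4.5)$ are half cells. On them $x$ is uniform only on $\left[-\frac{\pi}{6},0\right)$ and $\left(0,\frac{\pi}{6}\right)$ respectively, and only their combination restores uniformity. In that case $\lceil t\mu\rceil$ is also not discrete uniform, contrary to what the paper states.

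Your step of integrating the $1$-periodic indicator $\mathbf{1}\{\lceil s\rceil-s\le r\}$ over an interval of integer length handles exactly this case without any alignment assumption. It also shows that the lemma needs only $\mu\in\mathbb{Z}$, not evenness. The paper's argument is shorter, but as written it covers only $\mu\equiv 0 \pmod 4$.

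One cosmetic fix: your first display simplifies to $2\pi\left(\lceil t\mu\rceil-t\mu\right)$, which drops the factor $\frac{1}{\mu}$ and the shift $-\frac{\pi}{\mu}$. Delete it and start directly from the correct identity on the next line.
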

\begin{proof}
    According to Theorem \ref{Theorem:Simplified_Signal_Power}, $t = \frac{3}{4}-\frac{\psi_{0}}{2\pi}$ follows a continuous uniform random variable as $\mathcal{U}(-\frac{1}{4},\frac{3}{4})$. Therefore, $\left\lceil t\mu \right\rceil$ follows a discrete uniform distribution on $\left[\left\lceil -\frac{\mu}{4} \right\rceil,\left\lceil \frac{3\mu}{4} \right\rceil\right]$. Given $t\in\left(\frac{n}{\mu}, \frac{n+1}{\mu}\right]$, we have $\left\lceil t\mu \right\rceil = n+1$, where $n$ is a integer. Furthermore, we claim that given $t\in\left(\frac{n}{\mu}, \frac{n+1}{\mu}\right]$, $x = -2\pi t-\frac{\pi}{\mu}+\frac{2\pi}{\mu}\left\lceil t\mu\right\rceil$ follows a continuous uniform distribution as $\mathcal{U}\left(-\frac{\pi}{\mu},\frac{\pi}{\mu}\right)$, which is generalized for any integer $n$. Due to the fact that given $t \in (-\frac{1}{4},\frac{3}{4})$, the distribution of $x$ remains invariant, we conclude that for $t\in\left[-\frac{1}{4}, \frac{3}{4}\right]$, $x$ follows a continues uniform distribution as $\mathcal{U}\left(-\frac{\pi}{\mu},\frac{\pi}{\mu}\right)$, where the continuity of $x$ is ensured by $t$.
\end{proof}

The expression of $\alpha_{\mathrm{I}}$ given in \eqref{eq:Approximated_Expression_CUMA_Signal_Power} can be rewritten as
\begin{align}\label{Eq:Rewriteen_Alpha_I}
    \alpha_{\mathrm{I}} &\overset{(a)}{=} \frac{\zeta_{u}}{V^2} \cos^2{(x)} \nonumber \\
    &\overset{(b)}{=} \frac{\zeta_{u}}{2V^2} \left(1+\cos{(\left|2x\right|)}\right)\nonumber \\
    &\overset{(c)}{=} \frac{\zeta_{u}}{2V^2} \left(1+\cos{(q)}\right),
\end{align}
where $(a)$ applies the notation defined in this subsection and basic simplification, $(b)$ uses the fact that $\cos(\cdot)$ is an even function and $(c)$ is the result of notation $q = |2x|$. According to Lemma \ref{Lemma:continuous_RV_x}, $2x\in\left[-\frac{2\pi}{\mu},\frac{2\pi}{\mu}\right]$, and therefore $q \in\left[0,\frac{2\pi}{\mu}\right]$. Given $\mu$ as a positive integer, $\alpha_{\mathrm{I}}$ is now monotonic over the domain of $y$ and, therefore, invertible. The inverse function of $\alpha_{\mathrm{I}}$ with respect to $q$ is given by
\begin{align}
    q = \arccos{\left(\frac{2V^2}{\zeta_{u}}\alpha_{\mathrm{I}}-1\right)}.
\end{align}
Subsequently, the PDF of $\alpha _{\mathrm{I}}$ can be derived through the following PDF transform as 
\begin{align}
    f_{\alpha _{\mathrm{I}}}(\alpha) &= \frac{\mu}{2\pi}\left|\frac{\partial}{\partial\alpha} \arccos\left(\frac{2V^2}{\zeta_{u}}\alpha-1\right)\right| \nonumber \\
    &= \frac{\mu}{2\pi}\frac{2V^2}{\zeta_{u}}\sqrt{\frac{1}{1-\left(\frac{2V^2}{\zeta_{u}}\alpha-1\right)^2}},
\end{align}
which finally simplified as \eqref{Eq:PDF_Signal_Power}.

\subsection{Proof of Theorem \ref{Theorem:Interference_CIT}}\label{appendix:Proof_Interference_CIT}
We find the following Lemma useful in derivation. 
\begin{lemma}
     Given $\psi_{\tilde{u}}$ uniformly distributed on $(0,2\pi)$, we have the following moment expressions as, \begin{align}\label{First_Moment}\mathrm{E}\left[\sin{\left(\psi_{\tilde{u}}-\frac{\pi}{\mu}+\frac{2\pi}{\mu}\left\lceil t\mu \right\rceil\right)}\right] &= 0, \\ \label{Second_Moment}\mathrm{E}\left[\sin^2{\left(\psi_{\tilde{u}}-\frac{\pi}{\mu}+\frac{2\pi}{\mu}\left\lceil t\mu \right\rceil\right)}\right] &= \frac{1}{2}, \\ \label{Fourth_Moment}\mathrm{E}\left[\sin^4{\left(\psi_{\tilde{u}}-\frac{\pi}{\mu}+\frac{2\pi}{\mu}\left\lceil t\mu \right\rceil\right)}\right] &= \frac{3}{8}.
    \end{align}
\end{lemma}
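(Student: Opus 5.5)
The plan is to condition on $t$ and use the independence of $\psi_{\tilde{u}}$ from $\psi_u$. Since $t=\frac{3}{4}-\frac{\psi_u}{2\pi}$ depends only on the desired user's phase, conditioning on $t$ makes the offset $c \triangleq -\frac{\pi}{\mu}+\frac{2\pi}{\mu}\lceil t\mu\rceil$ a deterministic constant. Then $\theta \triangleq \psi_{\tilde{u}}+c$ is a uniformly distributed phase shifted by a constant. Reduced modulo $2\pi$, $\theta$ is again uniform on $(0,2\pi)$, because translation preserves the uniform measure on the circle. Each of the three functions $\sin$, $\sin^2$ and $\sin^4$ is $2\pi$-periodic, so for each $p\in\{1,2,4\}$ the conditional expectation $\mathrm{E}[\sin^p(\theta)\mid t]$ equals $\frac{1}{2\pi}\int_0^{2\pi}\sin^p(\theta)\,d\theta$. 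This value does not depend on $t$, so the tower property yields the unconditional moments directly.

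It then remains to evaluate three elementary integrals over a full period.

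\textbf{First moment.} The integral $\int_0^{2\pi}\sin\theta\,d\theta=0$ gives \eqref{First_Moment}.

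\textbf{Second moment.} Using $\sin^2\theta=\frac{1-\cos 2\theta}{2}$, the cosine term integrates to zero over a full period. The average is therefore $\frac{1}{2}$, which is \eqref{Second_Moment}.

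\textbf{Fourth moment.} Squaring the identity above gives $\sin^4\theta=\frac{1}{4}(1-2\cos2\theta+\cos^2 2\theta)$. Substituting $\cos^2 2\theta=\frac{1+\cos4\theta}{2}$ gives $\sin^4\theta=\frac{3}{8}-\frac{1}{2}\cos2\theta+\frac{1}{8}\cos4\theta$. Both oscillatory terms vanish over a full period, leaving $\frac{3}{8}$, which is \eqref{Fourth_Moment}.

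The only step that needs real care is the independence and conditioning argument. One must state that $\psi_{\tilde{u}}$ is independent of $\psi_u$, and hence of $\lceil t\mu\rceil$, and that the shift by the random but independent offset $c$ leaves the distribution of the phase modulo $2\pi$ uniform. Everything else is routine trigonometric averaging. As a by-product, the argument shows that $\sin^2(\theta)$ is independent of $t$, which is consistent with Remark~\ref{Remark:independence}. The same argument applied to $\mathrm{E}[Y_{\tilde{u}}]=\frac{\zeta_{\tilde{u}}}{2V^2}$ and $\mathrm{Var}[Y_{\tilde{u}}]=\frac{\zeta_{\tilde{u}}^2}{V^4}\left(\frac{3}{8}-\frac{1}{4}\right)$ then supplies $\omega$ and $\kappa$ in Theorem~\ref{Theorem:Interference_CIT}.
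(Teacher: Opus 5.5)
Your proof is correct and takes the same route as the paper, which gets the first moment from the periodicity of $\sin$ and the second and fourth moments from power-reduction identities. Your explicit conditioning on $t$ and use of the independence of $\psi_{\tilde{u}}$ from $\psi_u$ make rigorous a step the paper leaves implicit, and that step is genuinely needed: the choice $\psi_{\tilde{u}}=\psi_u$ is also uniform on $(0,2\pi)$ yet gives a nonzero first moment.
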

\begin{proof}
    The first moment can be easily verified based on the periodicity of sin function. The rest of the expressions can be found through trigonometric power reduction identities. 
\end{proof}

According to $\eqref{eq:Interference_Expression}$ and with the help of \eqref{Second_Moment}, the mean of interference power from user $\tilde{u}$ is given by
\begin{align}
\mathrm{E}\left[Y_{\tilde{u}}\right] = \frac{\zeta_{\tilde{{u}}}}{V^2}\mathrm{E}\left[\sin^2{\left(\psi_{\tilde{u}}-\frac{\pi}{\mu}+\frac{2\pi}{\mu}\left\lceil t\mu\right\rceil\right)}\right]= \frac{\zeta_{\tilde{{u}}}}{2V^2}.
\end{align}
Furthermore, the variance of interference power from user $\tilde{u}$ is expressed as
\begin{align}
\mathrm{Var}\left[Y_{\tilde{u}}\right] &\overset{(a)}{=}  \frac{\zeta_{\tilde{{u}}}^2}{V^4}\mathrm{E}\left[\sin^4{\left(\psi_{\tilde{u}}-\frac{\pi}{\mu}+\frac{2\pi}{\mu}\left\lceil t\mu\right\rceil\right)}\right] \nonumber \\
&- \frac{\zeta_{\tilde{{u}}}^2}{V^4}\mathrm{E}^2\left[\sin^2{\left(\psi_{\tilde{u}}-\frac{\pi}{\mu}+\frac{2\pi}{\mu}\left\lceil t\mu\right\rceil\right)}\right] \nonumber\\
 & \overset{(b)}{=} \frac{1}{8}\frac{\zeta_{\tilde{{u}}}^2}{V^4},
\end{align}
where $(a)$ uses $\mathrm{Var}[x] = \mathrm{E}[x^2]-\mathrm{E}^2[x]$ and $(b)$ substitutes \eqref{Second_Moment} and \eqref{Fourth_Moment} in. 

In the following, we first apply the Lyapunov version of the Central Limit Theorem (CLT) \cite{CLT} to demonstrate that the interference power $\beta_{\mathrm{I}}$ is approximately Gaussian, then used truncated Gaussian to fit its definition domain. To begin with, the $(2+\delta)$-th order central moment of $Y_{\tilde{u}}$ is bounded as
\begin{align}\label{Bounded_Moments}
&{\rm{E}}\left[ {{{\left| {{Y_{\tilde u}} - {\rm{E}}\left[ {{Y_{\tilde u}}} \right]} \right|}^{2 + \delta }}} \right] = \frac{{\zeta _{\tilde u}^{2 + \delta }}}{{{V^{4 + 2\delta }}}}\\
& \times {\rm{E}}\left[ {{{\left| {{{\sin }^2}\left( {{\psi _{\tilde u}} - \frac{\pi }{\mu } + \frac{{2\pi }}{\mu }\left\lceil {t\mu } \right\rceil } \right) - \frac{1}{2}} \right|}^{2 + \delta }}} \right] \le C{\left( {\frac{{\zeta _{\tilde u}^2}}{{{V^4}}}} \right)^{1 + \frac{\delta }{2}}},\nonumber
\end{align}
where constant $C$ is the upper bound of the expression $\mathrm{E}\left[\left|\sin^2{\left(\psi_{\tilde{u}}-\frac{\pi}{\mu}+\frac{2\pi}{\mu}\left\lceil t\mu\right\rceil\right)}-\frac{1}{2} \right|^{2+\delta} \right]$. Note \eqref{Bounded_Moments} is valid for any positive $\delta$, due to the fact that $0\leq\sin^2{(x)}\leq1$. Subsequently, we have
\begin{align}
    &\quad  
    \frac{\sum_{\substack{\tilde {u}=1\\ \tilde {u}\ne u}}^{U} \mathrm{E} \left[ \left|Y_{\tilde{u}} - \mathrm{E}\left[Y_{\tilde{u}}\right] \right|^{2+\delta} \right]}{\left[\sum_{\substack{\tilde {u}=1\\ \tilde {u}\ne u}}^{U} \mathrm{Var}\left[Y_{\tilde{u}}\right]\right]^{1+\frac{\delta}{2}}} \leq \frac{\sum_{\substack{\tilde {u}=1\\ \tilde {u}\ne u}}^{U}C\left(\frac{\zeta_{\tilde{u}}^2}{V^4}\right)^{1+\frac{\delta}{2}} }{\left[\sum_{\substack{\tilde {u}=1\\ \tilde {u}\ne u}}^{U}\frac{1}{8}\frac{\zeta_{\tilde{{u}}}^2}{V^4}\right]^{1+\frac{\delta}{2}}}.
\end{align}
It is easy to show that $\left[\sum_{\substack{\tilde {u}=1\\ \tilde {u}\ne u}}^{U}\frac{1}{8}\frac{\zeta_{\tilde{{u}}}^2}{V^4}\right]^{1+\frac{\delta}{2}}$ grows as a polynomial rate in $U$, while ${\sum_{\substack{\tilde {u}=1\\ \tilde {u}\ne u}}^{U}C\left(\frac{\zeta_{\tilde{u}}^2}{V^4}\right)^{1+\frac{\delta}{2}}}$ increases linearly with $U$. Therefore, we have 
\begin{align}
\lim_{U \to \infty}  
    \frac{\sum_{\substack{\tilde {u}=1\\ \tilde {u}\ne u}}^{U} \mathrm{E} \left[ \left|Y_{\tilde{u}} - \mathrm{E}\left[Y_{\tilde{u}}\right] \right|^{2+\delta} \right]}{\left[\sum_{\substack{\tilde {u}=1\\ \tilde {u}\ne u}}^{U} \mathrm{Var}\left[Y_{\tilde{u}}\right]\right]^{1+\frac{\delta}{2}}} = 0, 
\end{align}
which satisfies the Lyapunov's condition \cite[(27.16)]{CLT}. 

Despite CLT captures the main characteristics, $\beta_{\mathrm{I}}$ represents the power of the overall interference, which is strictly positive. In contrast, the Gaussian distribution is defined over the entire set of real numbers, including negative values. Therefore, we adopt the truncated Gaussian distribution for $\beta_{\mathrm{I}}$ in the following. 
According to \cite[p.20]{Truncated_Gaussian}, we have the distribution of $\beta_{\mathrm{I}}$, that is expressed in a truncated Gaussian form, with a PDF given by
\begin{align}
    f_{\beta_{\mathrm{I}}}(\beta) = \frac{1}{1-\Phi{\left(-\frac{\omega}{\kappa}\right)}}\frac{1}{\sqrt{2\pi\kappa^2}}e^{-\frac{\left(\beta-\omega\right)^2}{2\kappa^2}},
\end{align}
where $\omega = \frac{1}{2 V^2}\sum_{\substack{\tilde {u}=1\\ \tilde {u}\ne u}}^{U}\zeta_{\tilde{u}}$ and $\kappa = \sqrt{\frac{1}{8 V^4}\sum_{\substack{\tilde {u}=1\\ \tilde {u}\ne u}}^{U}\zeta_{\tilde{u}}^2}$ are the mean and standard deviation of $\beta_{\mathrm{I}}$ before truncation, respectively. $\Phi(\cdot)$ is the CDF of standard Gaussian distribution. Given $\Phi(-x) = 1-\Phi(x)$, we have the PDF expressed as \eqref{Eq: Interference_PDF}. 

\subsection{Proof of Corollary \ref{Coro:OP_Single_Integral}}\label{Appendix:Proof_OP_Single_Integral}
The outage probability of CUMA by using only in-phase component is given by \eqref{OP_Single_Integral_Derivation}, as shown at the top of next page, where $(a)$ comes from the definition of outage probability, $(b)$ substitutes \eqref{Eq:PDF_Signal_Power} and \eqref{Eq:PDF_Interference_Plus_Noise} in. $(c)$ evaluates the integral over $\beta$. After further simplicity, we reached \eqref{Eq:OP_Single_Integral} and completed the proof.

\begin{figure*}[!htbp]
\begin{align}\label{OP_Single_Integral_Derivation}
        {\mathcal O}_{\mathrm{I}}\left(\gamma\right) &\overset{(a)}{=} 1-\mathbb{P}\left\{\tilde{\beta} \leq \frac{\alpha_{\mathrm{I}}}{\gamma}\right\} \nonumber \\
        &\overset{(b)}{=}1-\int_{\alpha=\frac{\cos^2{\left(\frac{\pi}{\mu}\right)}\zeta_{u}}{V^2}}^{\frac{\zeta_{u}}{V^2}}\frac{\mu}{2\pi}\sqrt{\frac{V^2}{\zeta_{u}\alpha-V^2\alpha^2}}\int_{\tilde{\beta}=\frac{\bar{K}}{2\Gamma}}^{\frac{\alpha}{\gamma}}\frac{1}{\Phi{\left(\frac{\omega}{\kappa}\right)}}\frac{1}{\sqrt{2\pi\kappa^2}}e^{-\frac{\left(\tilde{\beta}-\omega-\frac{\bar{K}}{2\Gamma}\right)^2}{2\kappa^2}}d\tilde{\beta}d\alpha \nonumber \\
    &\overset{(c)}{=}1-\int_{\alpha=\frac{\cos^2{\left(\frac{\pi}{\mu}\right)}\zeta_{u}}{V^2}}^{\frac{\zeta_{u}}{V^2}}\frac{\mu}{2\pi}\sqrt{\frac{V^2}{\zeta_{u}-V^2\alpha^2}}\frac{1}{\Phi{\left(\frac{\omega}{\kappa}\right)}}\left[\Phi{\left(\frac{\frac{\alpha}{\gamma}-\omega-\frac{\bar{K}}{2\Gamma}}{\kappa}\right)}- \Phi{\left(-\frac{\omega}{\kappa}\right)} \right]d\alpha.
    \end{align}
    \hrulefill
\end{figure*}

\subsection{Proof of Theorem \ref{Theorem:PDF_Z}}\label{Appendix:Proof_PDF_Z}
The PDF of $Z_{\mathrm{I}} = \frac{\alpha_{\mathrm{I}}}{\tilde{\beta_{\mathrm{I}}}}$ can be expressed as
\begin{align}\label{PDF_Derivation_Z}
    f_{Z_{\mathrm{I}}}(z) &\overset{(a)}{=}\int_{-\infty}^{\infty}\left|{\tilde{\beta}}\right|f_{\alpha_{\mathrm{I}},\tilde{\beta_{\mathrm{I}}}}(z\tilde{\beta},\tilde{\beta})d\tilde{\beta} \nonumber\\
    &\overset{(b)}{=}\int_{0}^{\infty}{\tilde{\beta}}f_{\alpha_{\mathrm{I}},\tilde{\beta_{\mathrm{I}}}}(z\tilde{\beta},\tilde{\beta})d\tilde{\beta} \nonumber\\
    &\overset{(c)}{=}\int_{0}^{\infty}{\tilde{\beta}}f_{\alpha_{\mathrm{I}}}(z\tilde{\beta})f_{\tilde{\beta_{\mathrm{I}}}}(\tilde{\beta})d\tilde{\beta},
\end{align}
where $(a)$ is the result of the definition for joint PDF, $(b)$ follows the fact that $\tilde{\beta}$ is positive, and $(c)$ is due to the independence between $\alpha_{\mathrm{I}}$ and $\tilde{\beta_{\mathrm{I}}}$ that ensured by Remark \ref{Remark:independence}. Subsequently, accroding to \eqref{PDF_Derivation_Z}, with the PDF of $\alpha_{\mathrm{I}}$ given in Theorem \ref{Theorem_Distribution_Signal} and the PDF of $\tilde{\beta_{\mathrm{I}}}$ given in Corollary \ref{Coro:PDF_Interference_Plus_Noise},
we have the PDF expression of $Z_{\mathrm{I}}$, that expressed as
\begin{multline}
    f_{Z_{\mathrm{I}}}(z) = \int_{\frac{\cos^2{(\frac{\pi}{\mu})}\zeta_{u}}{z V^2}}^{\frac{\zeta_{u}}{z V^2}}{\tilde{\beta}}\frac{\mu}{2\pi}\sqrt{\frac{V^2}{\zeta_{u}z\tilde{\beta}-V^2z^2\tilde{\beta}^2}}  \\
    \times\frac{1}{\Phi{\left(\frac{\omega}{\kappa}\right)}}\frac{1}{\sqrt{2\pi\kappa^2}}e^{-\frac{\left(\tilde{\beta}-\omega-\frac{\bar{K}}{2\Gamma}\right)^2}{2\kappa}}d\tilde{\beta}.
\end{multline}
After further simplification, we have the PDF of $Z_{\mathrm{I}}$ that expressed as \eqref{Eq:PDF_Z}.

\subsection{Proof of Theorem \ref{Theorm:Set_K_2}}\label{Appendix:Theorm_Set_K_2}
The received CUMA signal power using $\mathcal{K}_{2}$ can be expressed as
\begin{multline}
\alpha_{\mathrm{I},\mathcal{K}_{2}} \overset{(a)}{=} 
\bigg|\sum_{k=2}^{K} 
\cos\left(\psi_{u} + \frac{2\pi}{\mu}(k-1)\right) \\
- \sum_{k \in {\mathcal{K}}_{1}} 
\cos\left(\psi_{u} + \frac{2\pi}{\mu}(k-1)\right)\bigg|^2 \\
\overset{(b)}{=}  \left|\frac{\sin\left(\frac{\pi}{\mu}K\right)}{\sin{\left(\frac{\pi}{\mu}\right)}}\cos\left(\psi_{u}+W\pi\right)-\sqrt{\alpha_{\mathrm{I},\mathcal{K}_{1}}}\right|^2,
\end{multline}
where $(a)$ comes from the fact that $\mathcal{K}_{1} \cup \mathcal{K}_{2} $ contains all the ports of the fluid antenna, $(b)$ is derived from \cite[(3.1.10)]{Geometric_Series} and uses result in Theorem \ref{Theorem:Simplified_Signal_Power}. Given a sufficiently large $K$, we have $\sin{\left(\frac{\pi}{\mu}K\right)} = \sin{\left(W\pi\frac{K}{K-1}\right)} \approx \sin{\left(W\pi\right)} = 0$. As a result, we claim that given a sufficiently large $K$, $\alpha_{\mathrm{I},\mathcal{K}_{2}} \approx \alpha_{\mathrm{I},\mathcal{K}_{1}}$. 

The proof on interference can be derived through a similar derivation process and is omitted here due to space limit.

\subsection{Proof of Theorem \ref{Theorem_CDF_Dif}}\label{Proof_Theorem_CDF_Dif}
In the following, we present the derivation of the difference in the CDF for the interval $\frac{\cos^2\left(\frac{\pi}{\mu}\right)\zeta_u}{V^2} < Y <\frac{\zeta_u}{V^2}$. The derivation of the CDF difference for the range $0<Y\leq\frac{\cos^2\left(\frac{\pi}{\mu}\right)\zeta_u}{V^2}$
is straightforward and thus omitted for brevity. We first note that the CDF expression of both $F_{\alpha_{\mathrm{I}}}(Y)$ and $F_{Y_{\tilde{u}}}(Y)$ can be rewritten in terms of the integration over the PDF of $Y_{\tilde{u}}$, $f_{Y_{\tilde{u}}}(y)$, given by
\begin{equation}\label{CDF_in_PDF}
    \left\{\begin{aligned}
    F_{\alpha_{\mathrm{I}}}(Y) &= \frac{\mu}{2}\int_{\frac{\cos^2{\left(\frac{\pi}{\mu}\right)}\zeta_{u}}{V^2}}^{Y}f_{Y_{\tilde{u}}}(y)dy,\\
    F_{Y_{\tilde{u}}}(Y)&= \int_{0}^{Y}f_{Y_{\tilde{u}}}(y)dy.
    \end{aligned} \right.
\end{equation}
Subsequently, we have 
\begin{align}\label{integration_equality}
    \int_{0}^{\frac{\zeta_{u}}{V^2}}f_{Y_{\tilde{u}}}(y)dy  &\overset{(a)}{=} \frac{\mu}{2}\int_{\frac{\cos^2{\left(\frac{\pi}{\mu}\right)}\zeta_{u}}{V^2}}^{\frac{\zeta_{u}}{V^2}}f_{Y_{\tilde{u}}}(y)dy,\nonumber \\
    \int_{0}^{\frac{\cos^2{\left(\frac{\pi}{\mu}\right)}\zeta_{u}}{V^2}}f_{Y_{\tilde{u}}}(y)dy &\overset{(b)}{=} \left(\frac{\mu}{2}-1\right)\int_{\frac{\cos^2{\left(\frac{\pi}{\mu}\right)}\zeta_{u}}{V^2}}^{\frac{\zeta_{u}}{V^2}}f_{Y_{\tilde{u}}}(y)dy,
\end{align}
where $(a)$ used the fact that $F_{\alpha_{\mathrm{I}}}(\frac{\zeta_{u}}{V^2}) = F_{Y_{\tilde{u}}}(\frac{\zeta_{u}}{V^2}) = 1$, and $(b)$ is the result of the additivity property of the integration. Consequently, the difference between $F_{\alpha_{\mathrm{I}}}(\cdot)$ and $F_{Y_{\tilde{u}}}(\cdot)$ can be expressed as
\begin{align}
    &F_{Y_{\tilde{u}}}(Y)-F_{\alpha_{\mathrm{I}}}(Y) \nonumber\\&\overset{(a)}{=} \int_{0}^{Y}f_{Y_{\tilde{u}}}(y)dy-\frac{\mu}{2}\int_{\frac{\cos^2{\left(\frac{\pi}{\mu}\right)}\zeta_{u}}{V^2}}^{Y}f_{Y_{\tilde{u}}}(y)dy, \nonumber \\
    &\overset{(b)}{=}\int_{0}^{\frac{\cos^2{\left(\frac{\pi}{\mu}\right)}\zeta_{u}}{V^2}}f_{Y_{\tilde{u}}}(y)dy-\left(\frac{\mu}{2}-1\right)\int_{\frac{\cos^2{\left(\frac{\pi}{\mu}\right)}\zeta_{u}}{V^2}}^{Y}f_{Y_{\tilde{u}}}(y)dy, \nonumber \\
    &\overset{(c)}{=}\left(\frac{\mu}{2}-1\right)\int_{\frac{\cos^2{\left(\frac{\pi}{\mu}\right)}\zeta_{u}}{V^2}}^{\frac{\zeta_{u}}{V^2}}f_{Y_{\tilde{u}}}(y)dy \nonumber \\&\quad\quad\quad\quad\quad\quad\quad-\left(\frac{\mu}{2}-1\right)\int_{\frac{\cos^2{\left(\frac{\pi}{\mu}\right)}\zeta_{u}}{V^2}}^{Y}f_{Y_{\tilde{u}}}(y)dy,
\end{align}
where $(a)$ is the result of \eqref{CDF_in_PDF}, $(c)$ uses the additivity property of integration, and $(c)$ substitutes \eqref{integration_equality} in. After further simplification, we have the expression given in \eqref{eq_CDF_Dif}. This completes the proof.

\bibliography{Reference_List}
\bibliographystyle{ieeetr}

\end{document}